\newtheorem{definition}{Definition}
\newtheorem{proposition}{Proposition}
\newcommand{\at}[2][]{#1|_{#2}}
\begin{document}

\title{Parameter Estimation in SAR Imagery using Stochastic Distances and Asymmetric Kernels}

\author{ 
Juliana~Gambini, Julia~Cassetti, Mar\'{\i}a Magdalena Lucini, and Alejandro~C.~Frery,~\IEEEmembership{Senior Member}

\thanks{This work was supported by Conicet, CNPq, and Fapeal.}
\thanks{Juliana Gambini is with the Instituto Tecnol\'ogico de Buenos Aires, Av.\ Madero 399, C1106ACD Buenos Aires,  Argentina and with Depto. de Ingenier\'{\i}a en Computaci\'on, Universidad Nacional de Tres de Febrero, Pcia.\ de Buenos Aires, Argentina, \texttt{juliana.gambini@gmail.com}.}
\thanks{Julia Cassetti is with the  Instituto de Desarrollo Humano, Universidad Nacional de Gral.\ Sarmiento, Pcia.\ de Buenos Aires, Argentina.}
\thanks{Magdalena Lucini is with the Facultad de Ciencias Exactas, Naturales y Agrimensura, Universidad Nacional del Nordeste, Av. Libertad 5460, 3400 Corrientes, Argentina.}
\thanks{Alejandro C.\ Frery is with the LaCCAN, Universidade Federal de Alagoas, 
Av. Lourival Melo Mota, s/n, 57072-900 Macei\'o -- AL, Brazil, 
\texttt{acfrery@gmail.com}}
}

\maketitle

\begin{abstract}
In this paper we analyze several strategies for the estimation of the roughness parameter of the $\mathcal G_I^0$ distribution.
It has been shown that this distribution is able to characterize a large number of targets in monopolarized SAR imagery, deserving the denomination of ``Universal Model.''
It is indexed by three parameters: the number of looks (which can be estimated in the whole image), a scale parameter, and the roughness or texture parameter.
The latter is closely related to the number of elementary backscatters in each pixel, one of the reasons for receiving  attention in the literature.
Although there are efforts in providing improved and robust estimates for such quantity, its dependable estimation still poses numerical problems in practice.
We discuss estimators based on the minimization of stochastic distances between empirical and theoretical densities, and argue in favor of using an estimator based on the Triangular distance and asymmetric kernels built with Inverse Gaussian densities.
We also provide new results regarding the heavytailedness of this distribution.
\end{abstract}

\begin{keywords}
Feature extraction, image texture analysis, statistics, synthetic aperture radar, speckle
\end{keywords}

\IEEEpeerreviewmaketitle

\section{Introduction}
\PARstart{T}{he} statistical modeling of the data is essential in order to interpret SAR images.
Speckled data have been described under the multiplicative model using the $\mathcal{G}$ family of distributions which is able to describe rough and extremely rough areas better than the $\mathcal{K}$ distribution~\cite{Frery97,MejailJacoboFreryBustos:IJRS}. 
The survey article~\cite{Gao2010} discusses in detail several statistical models for this kind of data. 

Under the $\mathcal{G}$ model different degrees of roughness are associated to different parameter values, therefore it is of paramount importance to have high quality estimators.
Several works have been devoted to the subject of improving estimation with two main venues of research, namely, analytic and resampling procedures.

The analytic approach was taken by Vasconcellos et al.~\cite{VasconcellosFrerySilva:CompStat} who quantified the bias in the estimation of the roughness parameter of the $\mathcal G_A^0$ distribution by maximum likelihood (ML). 
They proposed an analytic change for improved performance with respect to bias and mean squared error.
Also, Silva et al.~\cite{SilvaCribariFrery:ImprovedLikelihood:Environmetrics} computed analytic improvements for that estimator.
Such approaches reduce both the bias and the mean squared error of the estimation, at the expense of computing somewhat cumbersome correction terms, and yielding estimators whose robustness is largely unknown.

Cribari-Neto et al.~\cite{CribariFrerySilva:CSDA} compared several numerical improvements for that estimator using bootstrap.
Again, the improvement comes at the expense of intensive computing and with no known properties under contamination.
Allende et al.~\cite{AllendeFreryetal:JSCS:05} and Bustos et al.~\cite{BustosFreryLucini:Mestimators:2001} also sought for improved estimation, but seeking the robustness of the procedure.
The new estimators are resistant to contamination, and in some cases they also improve the mean squared error, but they require dealing with influence functions and asymptotic properties not always immediately available to remote sensing practitioners.

A common issue in all the aforementioned estimation procedures, including ML, and to those based on fractional moments~\cite{Frery97} and log-cumulants~\cite{MellinAnalysisPolSAR,BujorTrouveValetNicolas2004,khan2014} is the need of iterative algorithms for which there is no granted convergence to global solutions.
Such lack of convergence usually arises with small samples, precluding the use of such techniques in, e.g., statistical filters.
Frery et al.~\cite{FreryCribariSouza:JASP:04} and Pianto and Cribari-Neto~\cite{DealingMonotoneLikelihood} proposed techniques which aim at alleviating such issue, at the cost of additional computational load.

The main idea of this work is to develop an estimation method for the $\mathcal G_I^0$ 	model with good properties (as measured by its bias, the mean squared error and its ability to resist contamination) even with samples of small and moderate size, and low computational cost.
In order to achieve this task, we propose minimizing a stochastic distance between the fixed empirical evidence and the estimated model.

Shannon proposed a divergence between two density functions as a measure of the relative information between the two distributions. 
Divergences were studied by Kullback and Leibler and by R\'enyi~\cite{arndt2004information}, among others.
These divergences have multiple applications in signal and image processing~\cite{4218961}, medical image analysis diagnosis~\cite{5599869}, and automatic region detection in SAR imagery~\cite{5208318,ClassificationPolSARSegmentsMinimizationWishartDistances,EdgeDetectionDistancesEntropiesJSTARS,SARSegmentationLevelSetGA0}.
Liese and Vajda~\cite{Liese2006} provide a detailed theoretical analysis of divergence measures.

Cassetti et al.~\cite{APSAR2013ParameterEstimationStochasticDistances} compared estimators based on the Hellinger, Bhattacharyya, R\'enyi and Triangular distance with the ML estimator.
They presented evidence that the Triangular distance is the best choice for this application, but noticed that histograms led to many numerical instabilities.
This work presents improvements with respect to those results in the following regards: we assess the impact of contamination in the estimation,  we employ kernels rather than histograms and we compare estimators based on the Triangular distance with the ML, Fractional Moments and Log-Cumulants estimators.
Among the possibilities for such estimate, we opted for employing asymmetric kernels.
 
Kernels have been extensively applied with success to image processing problems as, for instance, 
object tracking~\cite{Bohyung2008}.
Kernels with positive support, which are of particular interest for our work, were employed in~\cite{chensx2000,scaillet2004}.
The general problem of using asymmetric kernels was studied in~\cite{ECT291329}. 

In~\cite{Achim03sarimage}, the authors demonstrate, through extensive modeling of real data, that SAR images are best described by families of heavy-tailed distributions. 
We provide new results regarding the heavytailedness of the $\mathcal G_I ^0$ distribution; in particular, we show that it has heavy tails with tail index $1-\alpha$.

The paper unfolds as follows. 
Section~\ref{sec_SAR} recalls the main properties of the $\mathcal{G}_I^0$ model and  the Maximum Likelihood, the $\frac{1}{2}$-moment and the log-cumulants methods for parameter estimation. 
Estimation with stochastic distances is presented in Section~\ref{estimation}, including different ways of calculating the  estimate of the underlying density function and the contamination models employed to assess the robustness of the procedure.
Section~\ref{results} presents the main results.
Section~\ref{conclusions} discusses the conclusions. 

\section{The $\mathcal{G}_I^0$ Model}\label{sec_SAR}

The return in monopolarized SAR images can be modeled as the product of two independent random variables, one corresponding to the backscatter $X$ and the other to the speckle noise $Y$. 
In this manner $Z=X Y  $ represents the return in each pixel under the multiplicative model.

For monopolarized data, speckle is modeled as a $\Gamma $ distributed random variable, with unitary mean and shape parameter $L\geq1$, the number of looks, while the backscatter is considered to obey  a reciprocal of Gamma law. 
This gives rise to the $\mathcal{G}_I^{0}$ distribution for the return.
Given the mathematical tractability and descriptive power of the $\mathcal{G}_I^{0}$ distribution for intensity data~\cite{MejailJacoboFreryBustos:IJRS,mejailfreryjacobobustos2001} it represents an attractive choice for SAR data modeling.

The density function for intensity data is given by
\begin{equation}
f_{\mathcal{G}_I^{0}}( z) =\frac{L^{L}\Gamma ( L-\alpha
) }{\gamma ^{\alpha }\Gamma ( -\alpha ) \Gamma (
L) }  
\frac{z^{L-1}}{( \gamma +zL) ^{L-\alpha }},%
\label{ec_dens_gI0}
\end{equation}
where $-\alpha,\gamma ,z>0$ and $L\geq 1$. 
The $r$-order moments are
\begin{equation}
E(Z^r) =\Big(\frac{\gamma}{L}\Big)^r\frac{\Gamma ( -\alpha-r )}{ \Gamma (-\alpha) }
\frac{\Gamma (L+r )}{\Gamma (L)},
\label{moments_gI0}
\end{equation}
provided $\alpha<-r$, and infinite otherwise.

With the double purpose of simplifying the calculations and making the results comparable, in the following we choose the scale parameter such that $E(Z)=1$, which is given by $\gamma^* =-\alpha-1$. 

One of the most important features of the $\mathcal{G}_I^0$ distribution is the interpretation of the $\alpha$  parameter, which is related to the roughness of the target. 
Values close to zero (typically above $-3$) suggest extreme textured targets, as urban zones. 
As the value decreases, it indicates regions with moderate texture (usually $\alpha \in [-6,-3]$), as forest zones.
Textureless targets, e.g. pasture, usually produce $\alpha\in(-\infty,-6)$. 
This is the reason why the accuracy in the estimation of $\alpha$ is so important.

Let $\bm{z}=(z_1,\dots, z_n)$ be a random sample of $n$ independent draws from the $\mathcal G_I ^0$ model.

Assuming $\gamma^*=-\alpha-1$, the ML estimator for the parameter $\alpha$, namely $\widehat\alpha_{\text{ML}}$ is the solution of the following nonlinear equation:
\begin{align*}
\Psi^0(\widehat{\alpha}_{\text{ML}})-\Psi^0(L-\widehat{\alpha}_{\text{ML}})-\log(1-\widehat{\alpha}_{\text{ML}})+{}\\
\frac{\widehat{\alpha}_{\text{ML}}}{1-\widehat{\alpha}_{\text{ML}}}+
\frac{1}{n}\sum_{i=1}^n{\log(1-\widehat{\alpha}_{\text{ML}}+Lz_i)}-{}
\\ \frac{\widehat{\alpha}_{\text{ML}}-L}{n}\sum_{i=1}^n \frac1{1-\widehat{\alpha}_{\text{ML}}+Lz_i}= 0 ,
\label{derloglikelihood_gI0}
\end{align*}
where $\Psi^0(\cdot)$ is the digamma function.
Some of the issues posed by the solution of this equation have been discussed, and partly solved, in~\cite{FreryCribariSouza:JASP:04}.

Fractional moments estimators have been widely used with success~\cite{Frery97,GambiniSC08}.
Using $r=1/2$ in~\eqref{moments_gI0} one has to solve
\begin{equation}
\frac{1}{n} \sum_{i=1}^n \sqrt{z_i} -\sqrt{\frac{-\widehat\alpha_{\text{Mom12}}-1}{L}}\frac{\Gamma ( -\widehat\alpha_{\text{Mom12}}-{\frac{1}{2}} )}{ \Gamma (-\widehat\alpha_{\text{Mom12}}) }
\frac{\Gamma (L+{\frac{1}{2}} )}{\Gamma (L)}=0.
\label{estim_moment1_2_gI0}
\end{equation}

Estimation based on log-cumulants is gaining space in the literature due to its nice properties and good performance~\cite{MellinAnalysisPolSAR,BujorTrouveValetNicolas2004,khan2014}. Following Tison et al.~\cite{Tison2004} the main second kind statistics can be defined as:
\begin{itemize}
\item First second kind characteristic function:  $\phi_x(s) = \int_0^{\infty} u^{s-1}p_x(u)du$
\item Second  second kind characteristic function: $\psi_x(s) = \log\phi_x(s)$
\item First order second kind characteristic moment:
\begin{equation}
 \tilde{m}_r = \frac{d \phi_x(s)}{ds } \at[\Big]{s=1}. \label{eq:fcm}
\end{equation}
 \item First order second kind characteristic cumulant (log-cumulant)
\begin{equation}
 \tilde{k}_r = \frac{d \psi_x(s)}{ds} \at[\Big]{s=1}. \label{eq:scm}
\end{equation}
\end{itemize}

If $p_x(u) = f_{\mathcal{G}_I^0}(u)$ then we have:
\begin{itemize}
  \item $\phi_x(s) = \frac{\left(\frac{L}{\gamma}\right)^{1-s}\Gamma(-1+L+s)\Gamma(1-s-\alpha)}{\Gamma(L)\Gamma(-\alpha)}$

  \item $\widetilde{m}_1 =  \frac{d \phi_x(s)}{ds } \at[\big]{s=1} = -\log(\frac{L}{\gamma}) + \Psi^0(L) - \Psi^0(-\alpha)$.
\end{itemize}

Using the developments presented in~\cite{Tison2004}, we have that $\widetilde{k}_1 = \widetilde{m_1}$ and the empirical expression for the first log-cumulant estimator for $n$ samples $z_i$ is $\widehat{\widetilde{k}}_1 ={n}^{-1} \sum_{i=1}^n\log z_i$. Therefore
\begin{equation}
  \widetilde{k}_1 =   -\log \frac{L}{\gamma} + \Psi^0(L) - \Psi^0(-\alpha). \label{eq:scGi}
\end{equation}

Assuming $\gamma^*=-\alpha-1$, the Log-Cumulant estimator of $\alpha$, denoted by $\widehat\alpha_{\text{LCum}}$ is then the solution of 
$\widehat{\widetilde{k}}_1 =   -\log \frac{L}{1-\widehat\alpha_{\text{LCum}}} + \Psi^0(L) - \Psi^0(-\widehat\alpha_{\text{LCum}})$, that is, the solution of 
\begin{equation} \label{eq:logm}
  \frac{1}{n} \sum_{i=1}^n\log z_i =   -\log \frac{L}{1-\widehat\alpha_{\text{LCum}}} + \Psi^0(L) + \Psi^0(-\widehat\alpha_{\text{LCum}}).
\end{equation}

The ability of these estimators to resist outliers has not still been assessed.

In the following we provide new results regarding the heavytailedness of the $\mathcal G_I^0$ distribution.
This partly explains the numerical issues faced when seeking for estimators for its parameters: the distribution is prone to producing extreme values.
The main concepts are from~\cite{Gre,Jorgensen,Rojo}.

\begin{definition} \label{Def:lenta}
The function $\ell\colon\mathbb R \to\mathbb R$ is slow-varying in the infinite if for every $t>0$ holds that
$$
\lim_{x\to+\infty}\dfrac{\ell (tx)}{\ell(x)}=1.
$$
\end{definition}

\begin{definition} 
A probability density function $f(x)$ has heavy tails if for any $\eta >0$ holds that
$$
f(x)=\ell(x)  x^{-\eta},
$$
where $\ell$ is a slow-varying function in the infinite, and $\eta$ is the tail index.
\end{definition}
The smaller the tail index is, the more prone to producing extreme observations the distribution is.

\begin{proposition}
The $\mathcal G_{I}^0$ distribution has heavy tails with tail index $1-\alpha$.
\end{proposition}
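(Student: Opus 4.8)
The plan is to verify the definition of heavy tails directly, by isolating the power-law behaviour of the density \eqref{ec_dens_gI0} as $z\to+\infty$. First I would collect all the factors of \eqref{ec_dens_gI0} that do not depend on $z$ into a single constant
$$
C=\frac{L^{L}\Gamma(L-\alpha)}{\gamma^{\alpha}\Gamma(-\alpha)\Gamma(L)},
$$
so that $f_{\mathcal{G}_I^0}(z)=C\,z^{L-1}(\gamma+Lz)^{-(L-\alpha)}$. The entire argument hinges on reading off the correct exponent to strip away: since $(\gamma+Lz)^{-(L-\alpha)}$ is asymptotically $(Lz)^{-(L-\alpha)}$ for large $z$, the density behaves like a constant multiple of $z^{(L-1)-(L-\alpha)}=z^{\alpha-1}=z^{-(1-\alpha)}$, which already singles out $\eta=1-\alpha$ as the candidate tail index (note $1-\alpha>0$ because $-\alpha>0$).

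Next I would set $\ell(z):=f_{\mathcal{G}_I^0}(z)\,z^{1-\alpha}$, so that by construction $f_{\mathcal{G}_I^0}(z)=\ell(z)\,z^{-(1-\alpha)}$, exactly the form demanded by the heavy-tails definition. Cancelling the powers of $z$ yields the closed form
$$
\ell(z)=C\,\Big(\frac{z}{\gamma+Lz}\Big)^{L-\alpha},
$$
which I then need to show is slow-varying at infinity in the sense of Definition~\ref{Def:lenta}. For this I would compute, for each fixed $t>0$,
$$
\frac{\ell(tz)}{\ell(z)}=\Big(\frac{t(\gamma+Lz)}{\gamma+Ltz}\Big)^{L-\alpha},
$$
and observe that $\frac{t(\gamma+Lz)}{\gamma+Ltz}=\frac{t\gamma+Ltz}{\gamma+Ltz}\to 1$ as $z\to+\infty$, since both numerator and denominator are dominated by $Ltz$. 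Hence the ratio tends to $1^{L-\alpha}=1$, $\ell$ is slow-varying, and the tail index is $1-\alpha$.

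The computation is essentially routine, so rather than a deep obstacle the one genuinely delicate point is the \emph{choice} of the exponent. The factorization $f_{\mathcal{G}_I^0}=\ell\cdot z^{-\eta}$ can be written for every $\eta$, but the resulting $\ell$ is slow-varying for precisely one value; any other choice leaves a residual power of $z$ that makes the limit in Definition~\ref{Def:lenta} diverge to $0$ or $+\infty$ rather than to $1$. Correctly matching the leading power of $z$ carried by $z^{L-1}$ against that carried by $(\gamma+Lz)^{-(L-\alpha)}$ is therefore the crux of the argument, and everything else reduces to the elementary limit above.
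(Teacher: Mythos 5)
Your proof is correct and follows essentially the same route as the paper: both define $\ell(x)=f_{\mathcal G_I^0}(x)\,x^{1-\alpha}$ and verify slow variation by showing $\ell(tx)/\ell(x)=\bigl(t(Lx+\gamma)/(Ltx+\gamma)\bigr)^{L-\alpha}\to 1$. Your added remark on why $\eta=1-\alpha$ is the only exponent that works is a useful clarification but does not change the argument.
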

\begin{proof}
Defining $\ell(x)=f_{\mathcal G_I^0}(x) x^{-\alpha+1}$ we have that
\begin{align*}
\lim_{x\to+\infty}\frac{\ell(t x)}{\ell(x)}&=\lim_{x\to+\infty}\frac{(tx)^{L-1} \, (Ltx+\gamma)^{\alpha-L} \, (tx)^{1-\alpha}}{x^{L-1} \, (Lx+\gamma)^{\alpha-L} \, x^{1-\alpha}}\\
&=\lim_{x\to+\infty}t^{L-\alpha}\left(\frac{Lx+\gamma}{L tx+\gamma}\right)^{L-\alpha}=1.
\end{align*}
This holds for every $t>0$, so $\ell$ is as in Def.~\ref{Def:lenta}
\end{proof}
As expected, the tail index is a decreasing function on $\alpha$, then the $\mathcal G_I^0$ distribution is more prone to producing extreme observations when the roughness parameter is bigger.

The remainder of this section is devoted to proving that the $\mathcal G_I^0$ model is outlier-prone.
Consider the random variables $z_1,\dots,z_n$ and the corresponding order statistics $Z_{1:n}\leq \cdots \leq Z_{n:n}$.
\begin{definition}
The distribution is absolutely outlier-prone if there are positive constants $\varepsilon,\delta$ and an integer $n_0$ such that
$$
\Pr(Z_{n:n}-Z_{n-1:n}> \varepsilon) \geq \delta
$$
holds for every integer $n\geq n_0$.
\end{definition}
A sufficient condition for being absolutely outlier-prone is that there are positive constants $\varepsilon,\delta,x_0$ such that the density of the distribution satisfies, for every $x\geq x_0$, that
\begin{equation}
\frac{f(x+\varepsilon)}{f(x)} \geq \delta. \label{eq:RelAbsOutProne}
\end{equation}
Since
\begin{align*}
\lim_{x\to+\infty}\frac{f_{\mathcal G_I^0}(x+\varepsilon)}{f_{\mathcal G_I^0}(x)} & =
\lim_{x\to+\infty}\Big(\frac{x+\varepsilon}{x}\Big)^{L-1}
\Big(\frac{Lx+\gamma}{L(x+\varepsilon)+\gamma}\Big)^{L-\alpha}\\
& = 1,
\end{align*}
we are in the presence of an absolutely outlier-prone model.

\section{Estimation by the Minimization of Stochastic Distances}
\label{estimation}

Information Theory provides divergences for comparing two distributions; in particular, we are interested in distances between densities.
Our proposal consists of, given the sample $\bm z$, computing $\widehat\alpha$, estimator of $\alpha$, as the point which minimizes the distance between the density $f_{\mathcal{G}_I^0}$ and an estimate of the underlying density function.

Cassetti et al.~\cite{APSAR2013ParameterEstimationStochasticDistances} assessed the Hellinger, Bhattacharyya, R\'enyi and Triangular distances, and they concluded that the latter outperforms the other ones in a variety of situations.
The Triangular distance between the densities $f_V$ and $f_W$ with common support $S$ is given by 
\begin{equation}
d_T(f_V,f_W)=\int_{S}\frac{(f_V-f_W)^2}{f_V+f_W}. \label{eq:TriangularDistance}
\end{equation}

Let $\bm{z}=(z_1,\dots, z_n)$ be a random sample of $n$ independent $\mathcal{G}_I^{0}(\alpha_0, \gamma^*_0, L_0)$-distributed observations. 
An  estimate of the underlying density function of $\bm{z}$, denoted $\widehat f$, is used to define the objective function to be minimized as a function of $\alpha$. 
The estimator for the $\alpha$ parameter based on the minimization of the Triangular distance between an estimate of the underlying density $\widehat f$ and the model $f_{\mathcal G_I^0}$, denoted by $\widehat{\alpha}_T$, is given by
\begin{equation}
\widehat{\alpha}_T= \arg\min_{-20\leq\alpha \leq -1} d_T\big(f_{\mathcal{G}_I^{0}}(\alpha,\gamma^*_0, L_0 ), \widehat f(\bm z)\big),
\label{minimization}
\end{equation}
where $\gamma_0^*$ and $L_0$ are known and $d_T$ is given in~\eqref{eq:TriangularDistance}.
Solving~\eqref{minimization} requires two steps: the integration of~\eqref{eq:TriangularDistance} using the $\mathcal G_I^0$ density and the density estimate $\widehat f$, and optimizing with respect to $\alpha$.
To the best of the authors' knowledge, there are no explicit analytic results for either problem, so we rely on numerical procedures.
The range of the search is established to avoid numerical instabilities.

We also calculate numerically the ML, the $\frac{1}{2}$-moment and Log-Cumulants based estimators using the same established range of the search  and compare all the methods through the bias and the mean squared error by simulation.

\subsection{Estimate of the Underlying Distribution}
\label{EmpiricalDistribution}

The choice of the way in which the underlying density function  is computed is very important in our proposal. 
In this section we describe several possibilities for computing it, and we justify our choice.

Histograms are the simplest empirical densities, but they lack unicity and they are not smooth functions.

Given that the model we are interested is asymmetric, another possibility is to use asymmetric kernels~\cite{chensx2000, ECT291329}.
Let $\bm z = (z_1,\dots, z_n)$ be a random sample of size $n$, with an unknown density probability function $f$, an  estimate of its density function using kernels is given by 
$$
\widehat{f}_b(t;\bm z)=\frac{1}{n}\sum_{i=1}^n K(t;z_i,b),
$$ 
where $b$ is the bandwith of the kernel $K$.

Among the many available asymmetric kernels, we worked with two: the densities of the Gamma and Inverse Gaussian distributions.
These kernels are given by
\begin{align*}
K_{\Gamma}( t; z_i,b) & =\frac{t^{{z_i}/{b}} \exp\{-{t}/{b}\}}{b^{{z_i}/{b}+1} \Gamma({z_i}/{b}+1)}, \text{and}\\
K_{\text{IG}}( t; z_i,b) & =\frac{1}{\sqrt{2\pi b t^3}} 
	\exp
		\Big\{-\frac{1}{2b z_i} \Big(\frac{t}{z_i}+\frac{z_i}{t}-2\Big)\Big\},
\end{align*}
respectively, for every $t>0$.
Empirical studies led us to employ $b=n^{-1/2}/5$.
 
As an example, Figure~\ref{figure:AjustesVarios2} shows the $\mathcal{G}_I^0(-3,2,1)$ density and three estimates of  the underlying density function obtained with $n=30$ samples: those produced by the Gamma and the Inverse Gaussian kernels, and the histogram computed with  the Freedman-Diaconis method. 

\begin{figure}[hbt]
 \begin{center}
 \includegraphics[width=\linewidth]{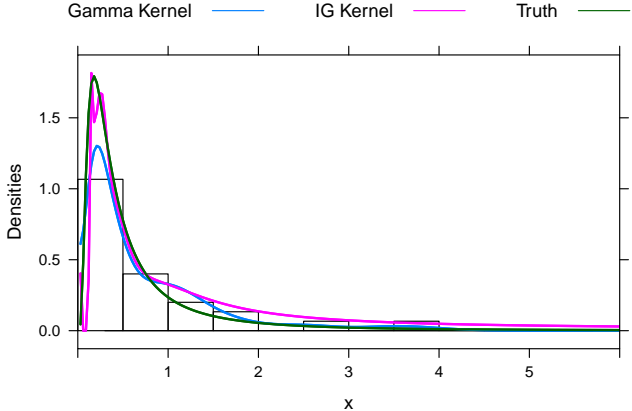} 
\caption{Fitting the $\mathcal{G}_I^0$ density of thirty ${\mathcal G_I^0}(-3,2,1)$ observations in different ways, along with the histogram.}
		 \label{figure:AjustesVarios2}
  \end{center}
\end{figure}

As it can be seen, fitting the underlying density function using kernels is better than using a histogram.
After extensive numerical studies, we opted for the $K_{\text{IG}}$ kernel due to its low computational cost, its ability to describe observations with large variance, and its good numerical stability.
In agreement with what Bouezmarni and Scaillet~\cite{ECT291329} reported, the Gamma kernel is also susceptible to numerical instabilities.

\subsection{Contamination}
\label{contamination}

Estimators in signal and image processing are often used in a wide variety of situations, so their robustness is of highest importance. 
Robustness is the ability to perform well when the data obey the assumed model, and to not provide completely useless results when the observations do not follow it exactly.
Robustness, in this sense, is essential when designing image processing and analysis applications.
Filters, for instance, employ estimators based, typically, in small samples which, more often than not, receive samples from more than one class.
Supervised classification relies on samples, and the effect of contamination (often referred to as ``training errors'') on the results has been attested, among other works, in~\cite{FreryFerreroBustosIJRS_FinalFinal2008}.	

In order to assess the robustness of the estimators, we propose three contamination models able to describe realistic departures from the hypothetical ``independent identically distributed sample'' assumption.

One of the sources of contamination in SAR imagery is the phenomenon of double bounce which results in some pixels having a high return value. 
The presence of such outliers may provoke big errors in the estimation. 

In order to assess the robustness of the proposal, we generate contaminated random samples using three types (cases) of contamination, with $0<\epsilon \ll 1$ the proportion of contamination. 
Let the Bernoulli random variable $B$ with probability of success $\epsilon$ model the occurrence of contamination.
Let $C \in \mathbb R_+$ be a large value.
\begin{itemize}
\item Case~1:
Let $W$   and $U$ be such that $W \sim \mathcal{G}_I^0(\alpha_1,\gamma_1^*,L)$,  and $U \sim \mathcal{G}_I^0(\alpha_2,\gamma_2^*,L) $. Define $Z=BU+(1-B)W$, then we generate $\{z_1,\dots,z_n\}$ identically distributed random variables with cumulative distribution function 
$$
(1-\epsilon) \mathcal{F}_{\mathcal{G}_I^0(\alpha_1,\gamma_1^*,L)}(z)+\epsilon\mathcal{F}_{\mathcal{G}_I^0(\alpha_2,\gamma_2^*,L)}(z),
$$
where $\mathcal{F}_{\mathcal{G}_I^0(\alpha,\gamma,L)}$ is the cumulative distribution function of a $\mathcal{G}_I^0(\alpha,\gamma,L)$ random variable.
\item Case~2: Consider $W \sim \mathcal{G}_I^0(\alpha_1,\gamma_1^*,L)$; return $Z=BC+(1-B)W$.
\item Case~3:
Consider $W \sim \mathcal{G}_I^0(\alpha,\gamma^*,L)$ and $U\sim \mathcal{G}_I^0(\alpha,10^k\gamma^*,L) $ with $k \in \mathbb{N}$. 
Return $Z=BU+(1-B)W$, then $\{z_1,\dots,z_n\}$ are identically distributed random variables with cumulative distribution function 
$$
(1-\epsilon) \mathcal{F}_{\mathcal{G}_I^0(\alpha,\gamma^*,L)}(z)+\epsilon\mathcal{F}_{\mathcal{G}_I^0(\alpha,10^k\gamma^*,L)}(z).
$$
\end{itemize}

All these models consider departures from the hypothesized distribution $\mathcal{G}_I^0(\alpha,\gamma^*,L)$.
The first type of contamination assumes that, with probability $\epsilon$, instead of observing outcomes from the ``right'' model, a sample from a different one will be observed; notice that the outlier may be close to the other observations.
The second type returns a fixed and typically large value, $C$, with probability $\epsilon$.
The third type is a particular case of the first, where the contamination assumes the form of a distribution whose scale is $k$ orders of magnitude larger than the hypothesized model.
We use the three cases of contamination models in our assessment.

\section{Results}\label{results}

A Monte Carlo experiment was set to assess the performance of each estimation procedure.
The parameter space consists of the grid formed by (i)~three values of roughness: $\alpha=\{-1.5, -3, -5\}$, which are representative of areas with extreme and moderate texture; (ii)~three usual levels of signal-to-noise processing, through $L=\{1,3,8\}$; (iii)~sample sizes $n=\{9, 25,49, 81,121,1000\}$, related to squared windows of side $3$, $5$, $7$, $9$ and  $11$, and to  a large sample; 
(iv)~each of the three cases of contamination, with $\epsilon=\{0.001,0.005,0.01\}$, $\alpha_2=\{-4,-15\}$, $C=100$ and $k=2$.

One thousand samples were drawn for each point of the parameter space, producing $\{\widehat{\alpha}_1, \dots, \widehat{\alpha}_{1000}\}$ estimates of each kind.
Estimates of the mean $\overline{\widehat{\alpha}}=(1000)^{-1}{\sum_{i=1}^{1000}{\widehat{\alpha}_i}}$, bias  $
\widehat{B}(\widehat\alpha) = \overline{\widehat\alpha_i}- \alpha$
and mean squared error $\widehat{\operatorname{mse}}=({1000})^{-1}{\sum_{i=1}^{1000}{(\widehat{\alpha}_i-\alpha)^2}}$ were then computed and compared.

In the following figures, ``ML'', ``T'',``Mom12'' and ``LCum'' denote the estimator based on the Maximum likelihood, Triangular distance, $\frac{1}{2}$-moment and Log-Cumulant, respectively.
Sample sizes are in the abscissas, which are presented in logarithmic scale.
The estimates of the mean are presented with error bars which show the Gaussian confidence interval at the $95\%$ level of confidence.
Only a few points of this parameter space are presented for brevity, but the remaining results are consistent with what is discussed here.

Figure~\ref{figure:alfasEstimados2} shows the mean of the estimators $\widehat{\alpha}$ in uncontaminated data ($\epsilon=0$) with different values of $n$ and $L$.
Only two of the four estimators lie, in mean, very close to the true value: $\widehat{\alpha}_{\text{MV}}$ and $\widehat{\alpha}_{\text{T}}$; it is noticeable how far from the true value lie $\widehat{\alpha}_{\text{LCum}}$ and $\widehat{\alpha}_{\text{Mom12}}$ when $\alpha=-5$.
It is noticeable that $\widehat\alpha_{\text{ML}}$ has a systematical tendency to underestimate the true value of $\alpha$.
Vasconcellos et al.~\cite{VasconcellosFrerySilva:CompStat} computed a first order approximation of such bias for a closely related model, and our results are in agreement with those.
The estimator based on the Triangular distance $\widehat\alpha_{\text T}$ compensates this bias.

Figure~\ref{figure:ecmEstimados2} shows the sample mean squared error of the estimates under the same situation, i.e., uncontaminated data. 
In most cases, all estimators have very similar $\operatorname{mse}$, not being possible to say that one is systematically the best one.
This is encouraging, since it provides evidence that $\widehat\alpha_{\text T}$ exhibits the first property of a good robust estimator, i.e., not being unacceptable under the true model.


\begin{figure}[hbt]
\centering
	{\includegraphics[angle=-90,width=\linewidth]{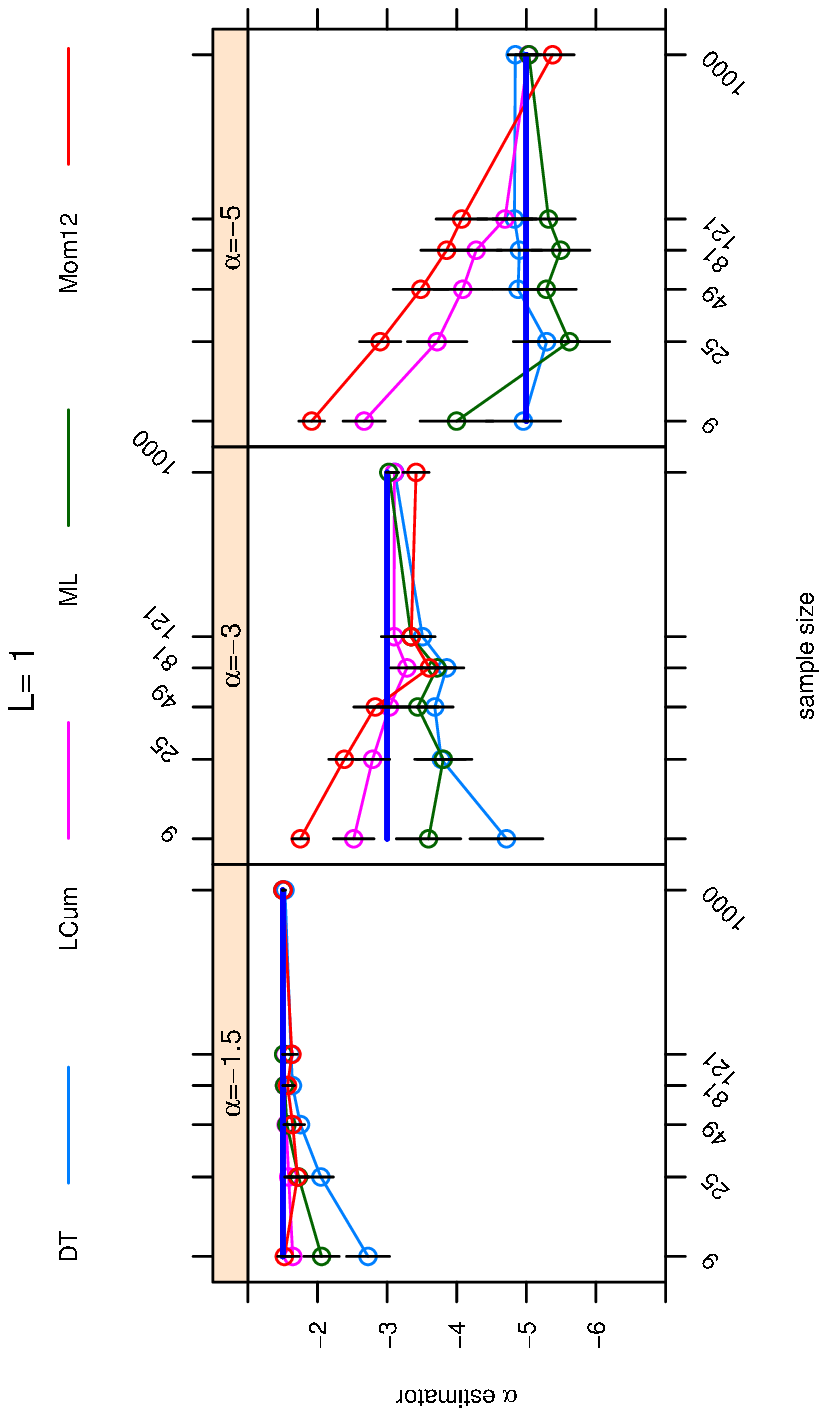} }\vskip 1em
	{\includegraphics[angle=-90,width=\linewidth]{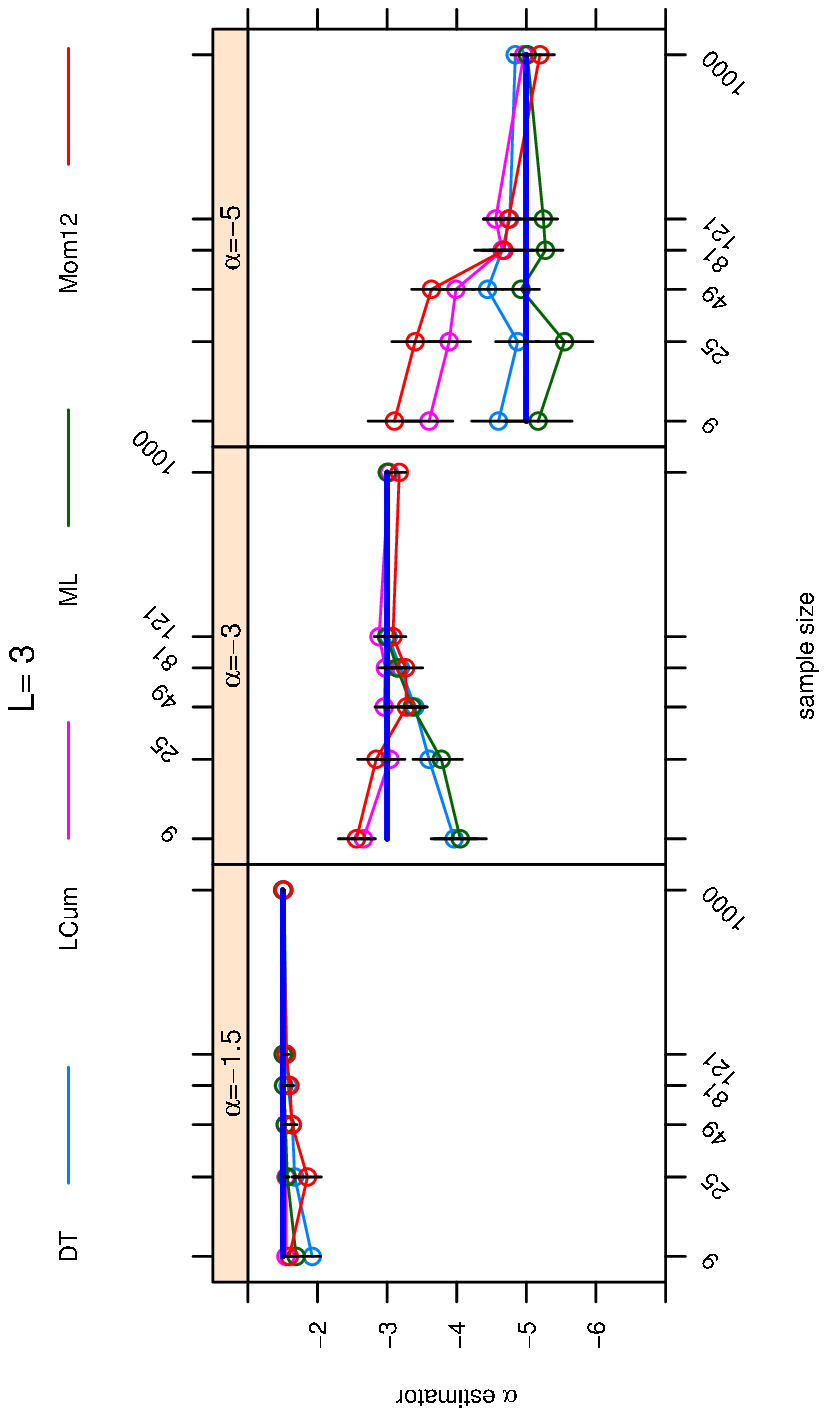}}\vskip 1em
	{\includegraphics[angle=-90,width=\linewidth]{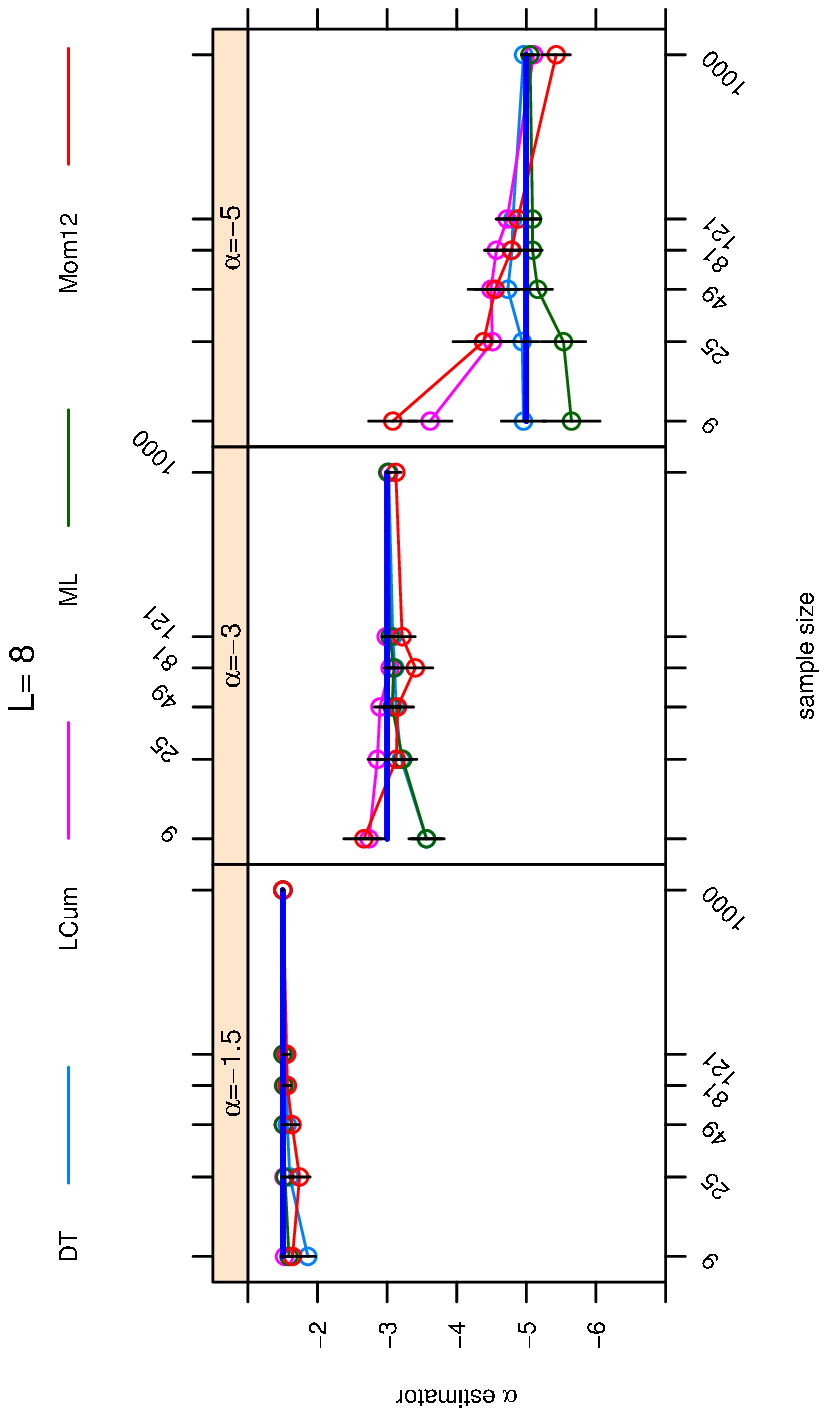}}
 \caption{Sample mean of estimates under uncontaminated data. The blue line is the true parameter.}
 \label{figure:alfasEstimados2}
\end{figure}

\begin{figure}[hbt]
\centering
 {\includegraphics[angle=-90,width=\linewidth]{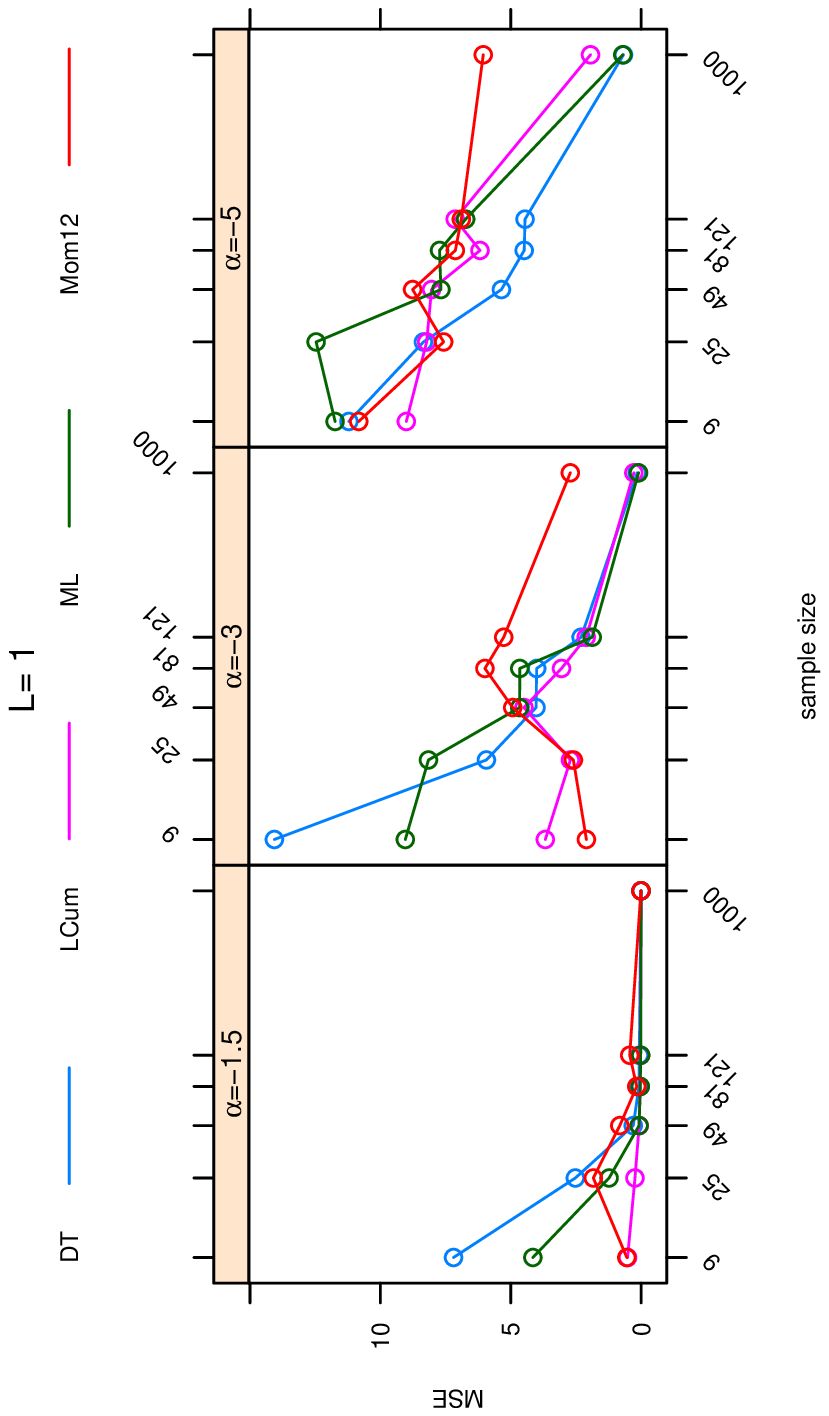} }\vskip 1em
	{\includegraphics[angle=-90,width=\linewidth]{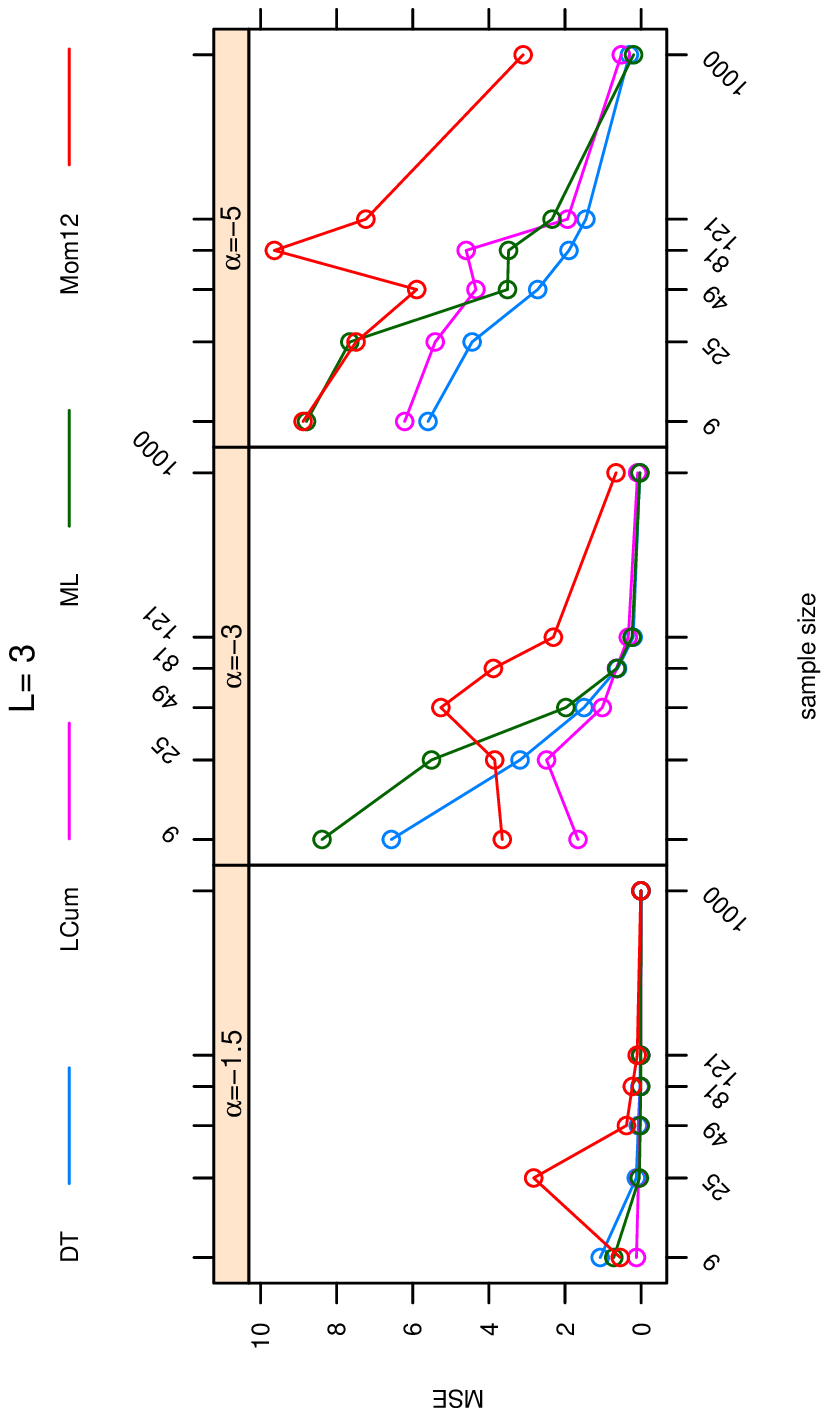}}\vskip 1em
	{\includegraphics[angle=-90,width=\linewidth]{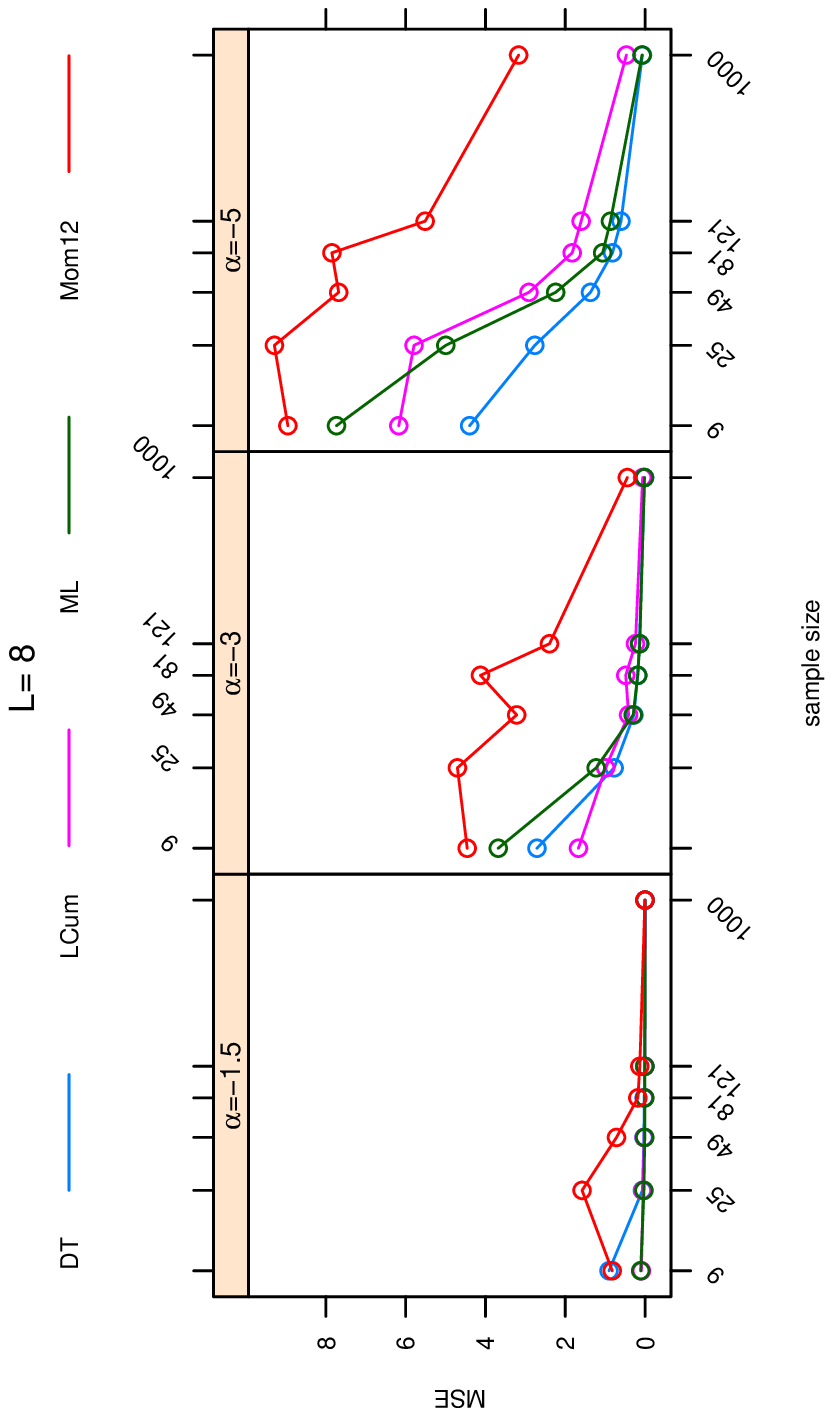}}
 \caption{Sample mean squared error of estimates under uncontaminated data.}
 \label{figure:ecmEstimados2}
\end{figure}

The mean times of processing, measured in seconds, for each method and each case, were computed, as an example, the mean times of processing for $L=1$ and $n=81$ are presented in Table~\ref{tablaDeTiemposmedios}. It can be seen that the new method has a higher computational cost. The other cases are consistent with this table.
The details of the computer platform are presented in the appendix.

\begin{table}[htb]
\caption{Mean Times for simulated data without Contamination, $L=1$, $n=81$ }
\label{tablaDeTiemposmedios}
\centering
\begin{tabular}{cccc}
\toprule
MV& DT& Moment $\frac{1}{2}$ & Log Cumulant \\
\midrule
$0.003$& $2.223$ & $0.0001$ &$0.003$ \\
\bottomrule
\end{tabular}
\end{table}


Figures~\ref{figure:alfasEstimados3} and~\ref{figure:ecmEstimados3} show, respectively, the sample mean and mean squared error of the estimates under Case~1 contamination with $\alpha_2=-15$, $\epsilon=0.01$, and varying $n$ and $L$.  
This type of contamination injects, with probability $\epsilon=0.01$, observations with almost no texture in the sample under analysis.
As expected, the influence of such perturbation is more noticeable in those situations where the underlying model is further away from the contamination, i.e., for larger values of $\alpha$.
This is particularly clear in Fig.~\ref{figure:ecmEstimados3}, which shows that the mean squared errors of $\widehat\alpha_{\text{ML}}$, $\widehat\alpha_{\text{Mom12}}$ and $\widehat\alpha_{\text{LCum}}$ are larger than that of $\widehat\alpha_{\text T}$ for $L=3,8$, with not a clear distinction for $L=1$ except that $\widehat\alpha_{\text T}$ is at least very competitive in the $\alpha=-3,-5$ cases.

\begin{figure}[hbt] 
\centering
 {\includegraphics[angle=-90,width=\linewidth]{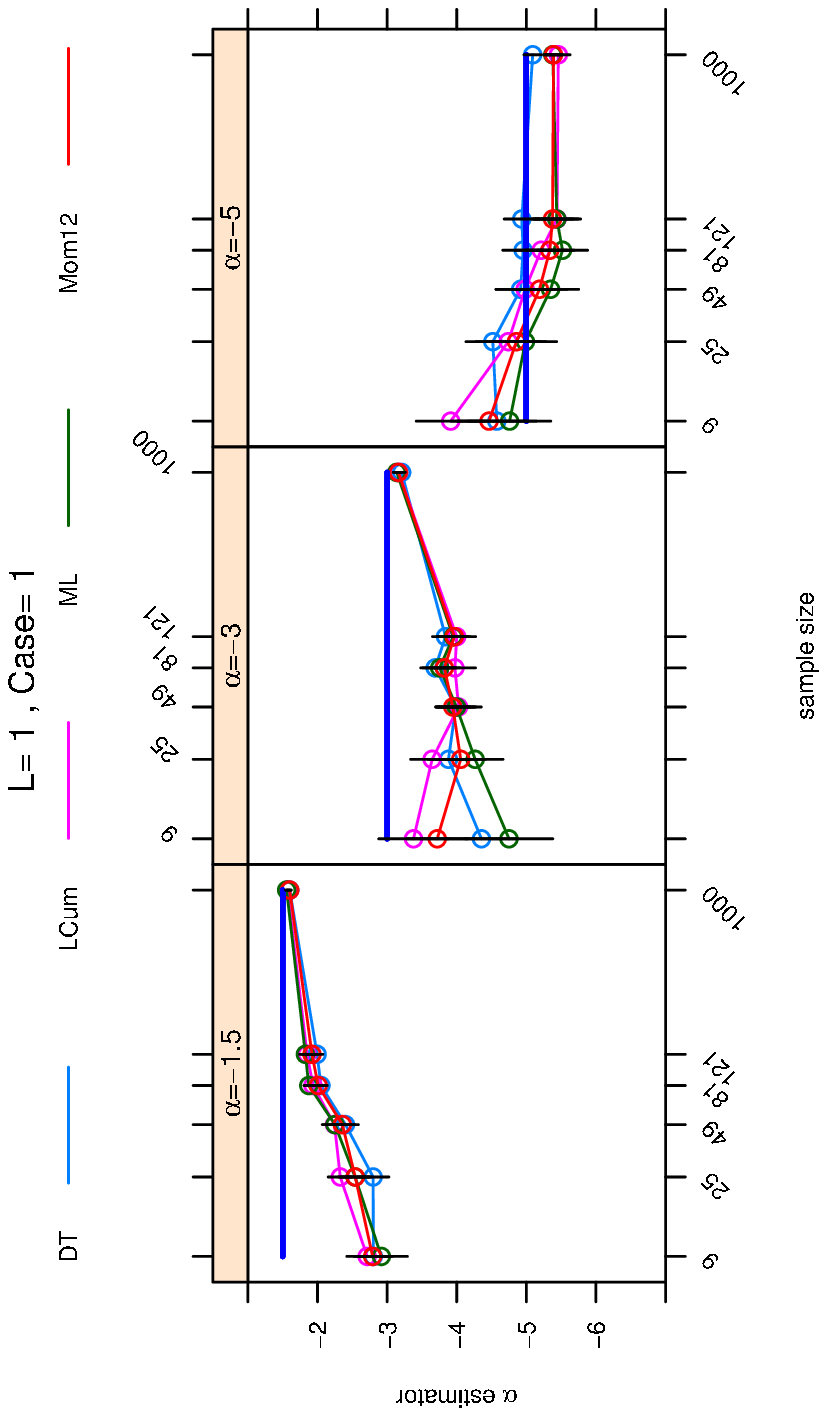} }\vskip 1em
	{\includegraphics[angle=-90,width=\linewidth]{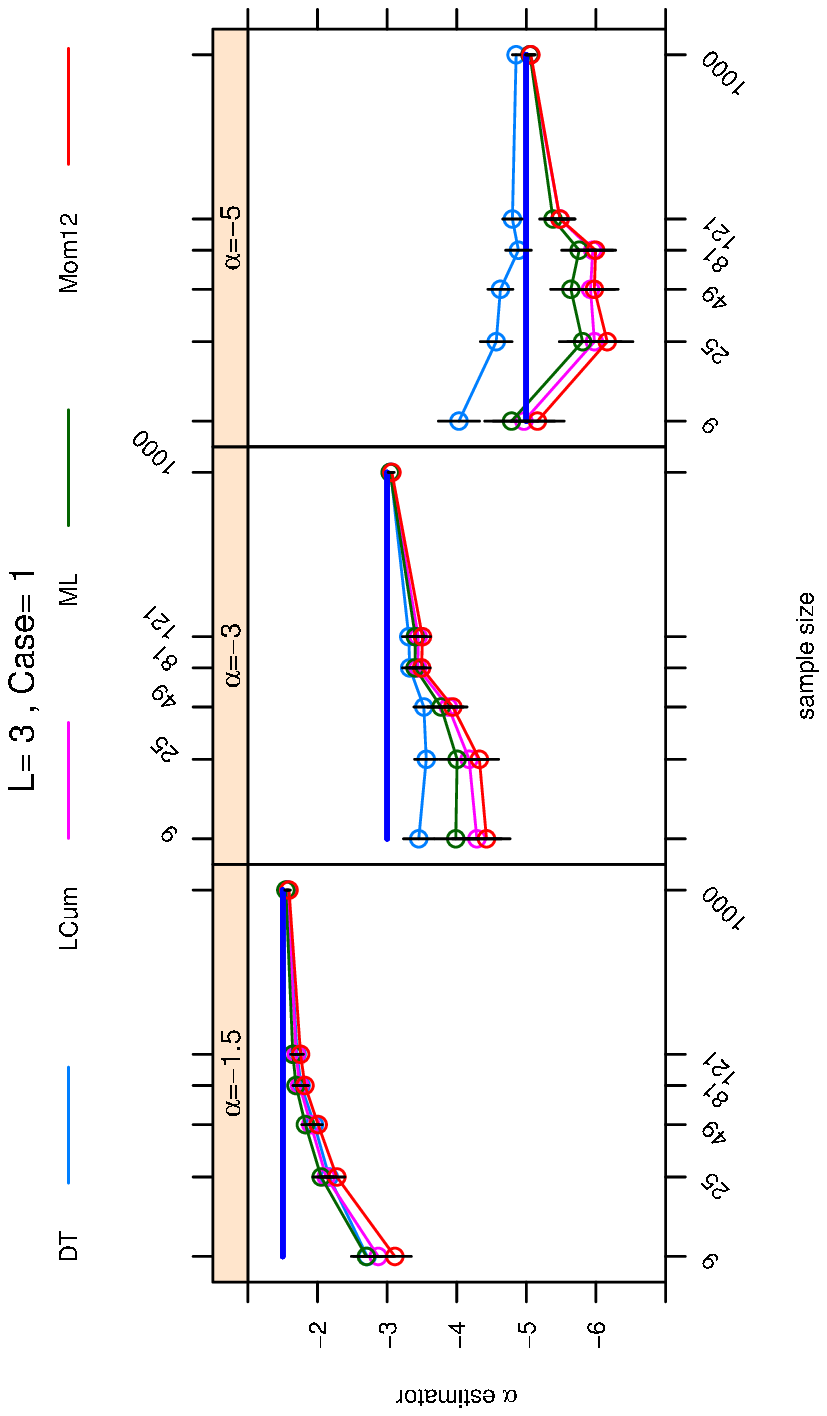}}\vskip 1em
	{\includegraphics[angle=-90,width=\linewidth]{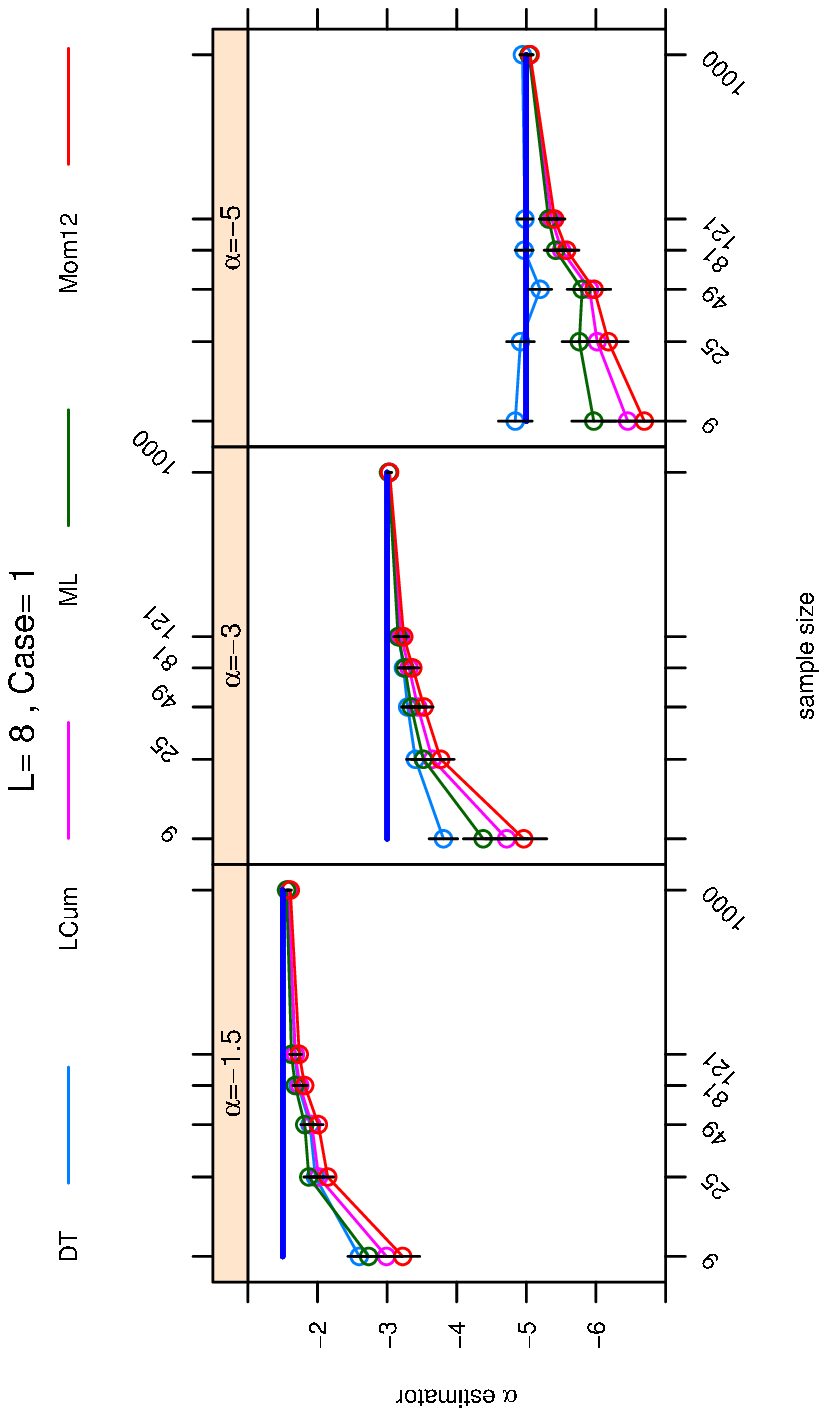}}
 \caption{Sample mean of estimates, Case~1 with $\alpha_2=-15$ and $\epsilon=0.01$.}
 \label{figure:alfasEstimados3}
\end{figure}

\begin{figure}[hbt]
\centering
 {\includegraphics[angle=-90,width=\linewidth]{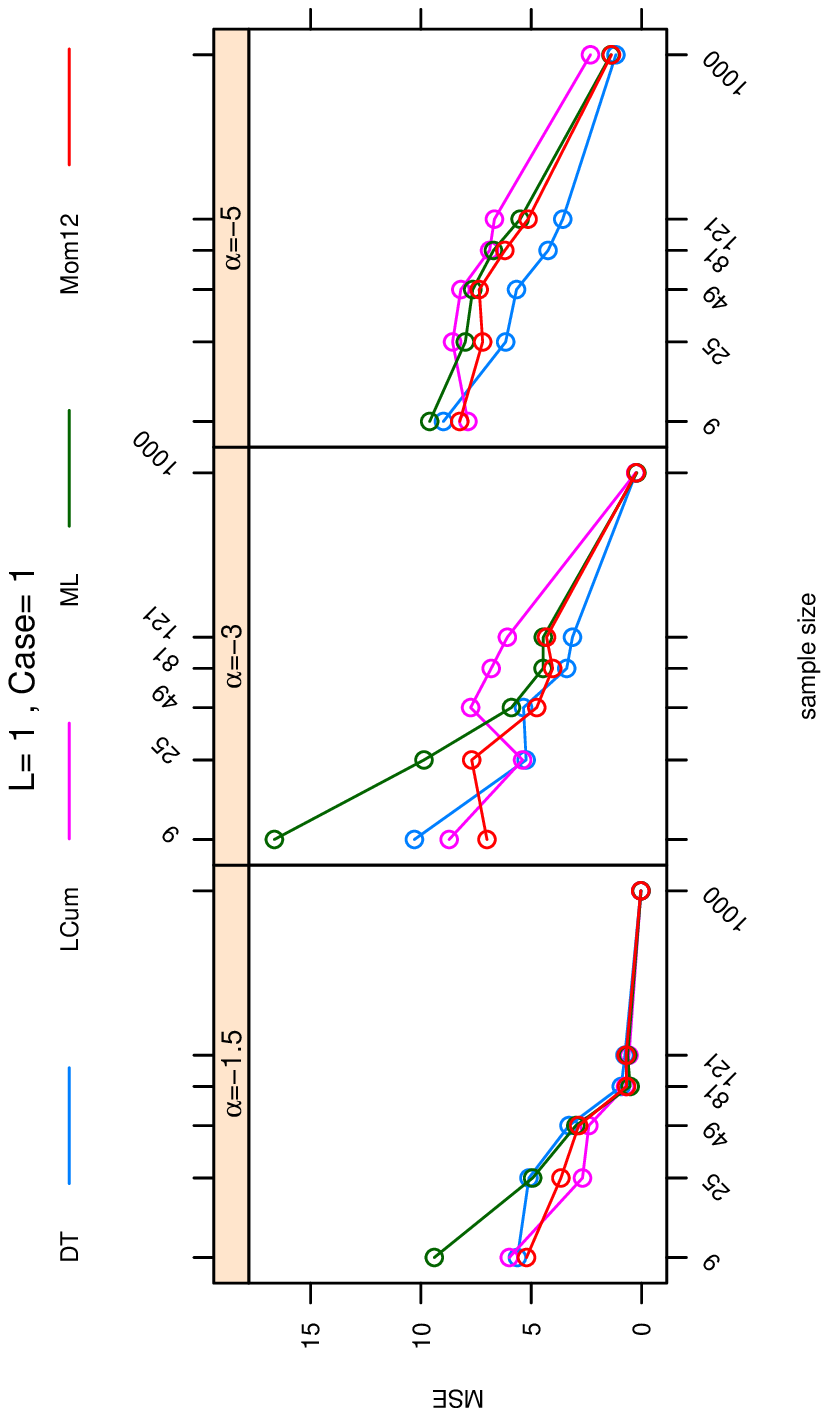} }\vskip 1em
	{\includegraphics[angle=-90,width=\linewidth]{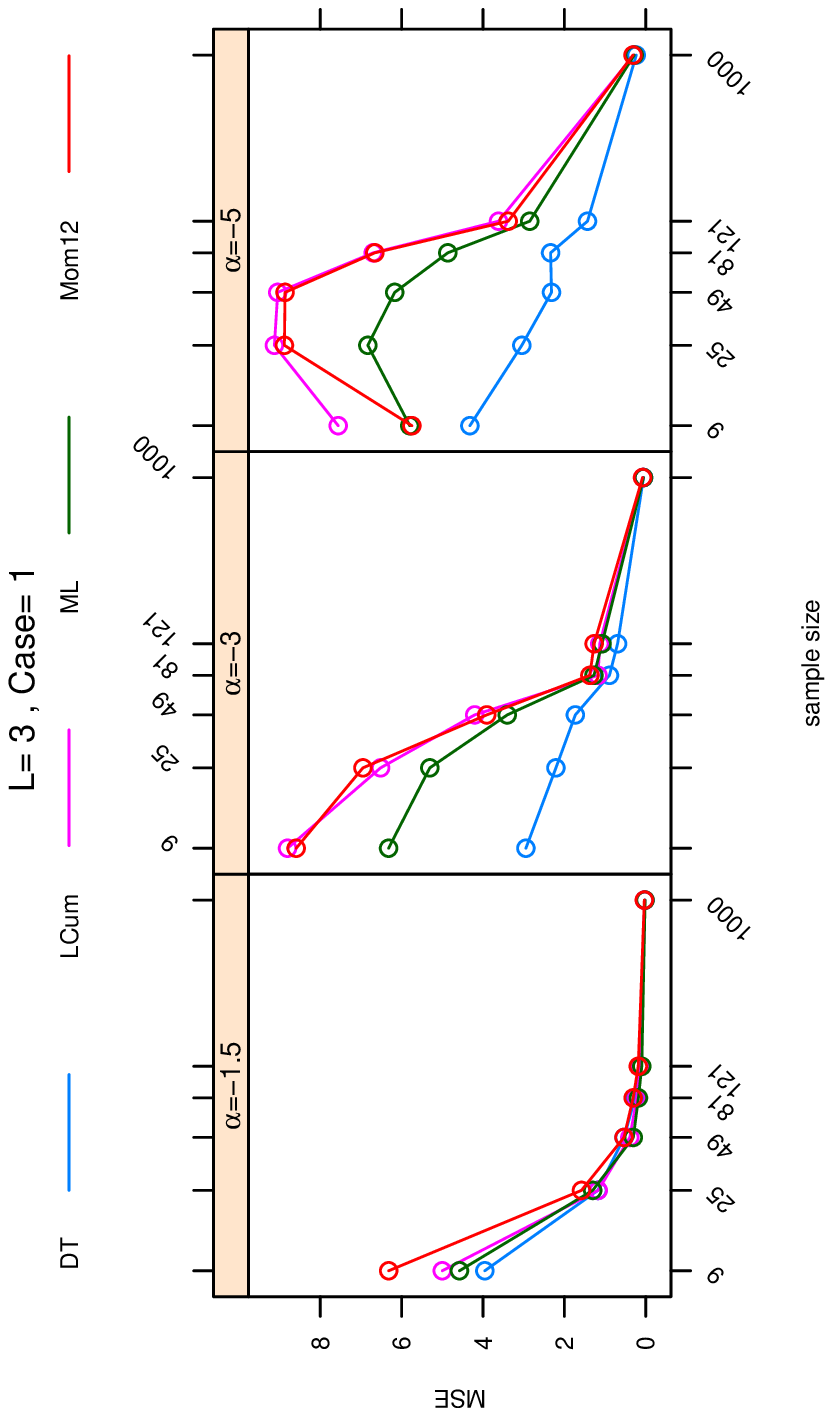}}\vskip 1em
	{\includegraphics[angle=-90,width=\linewidth]{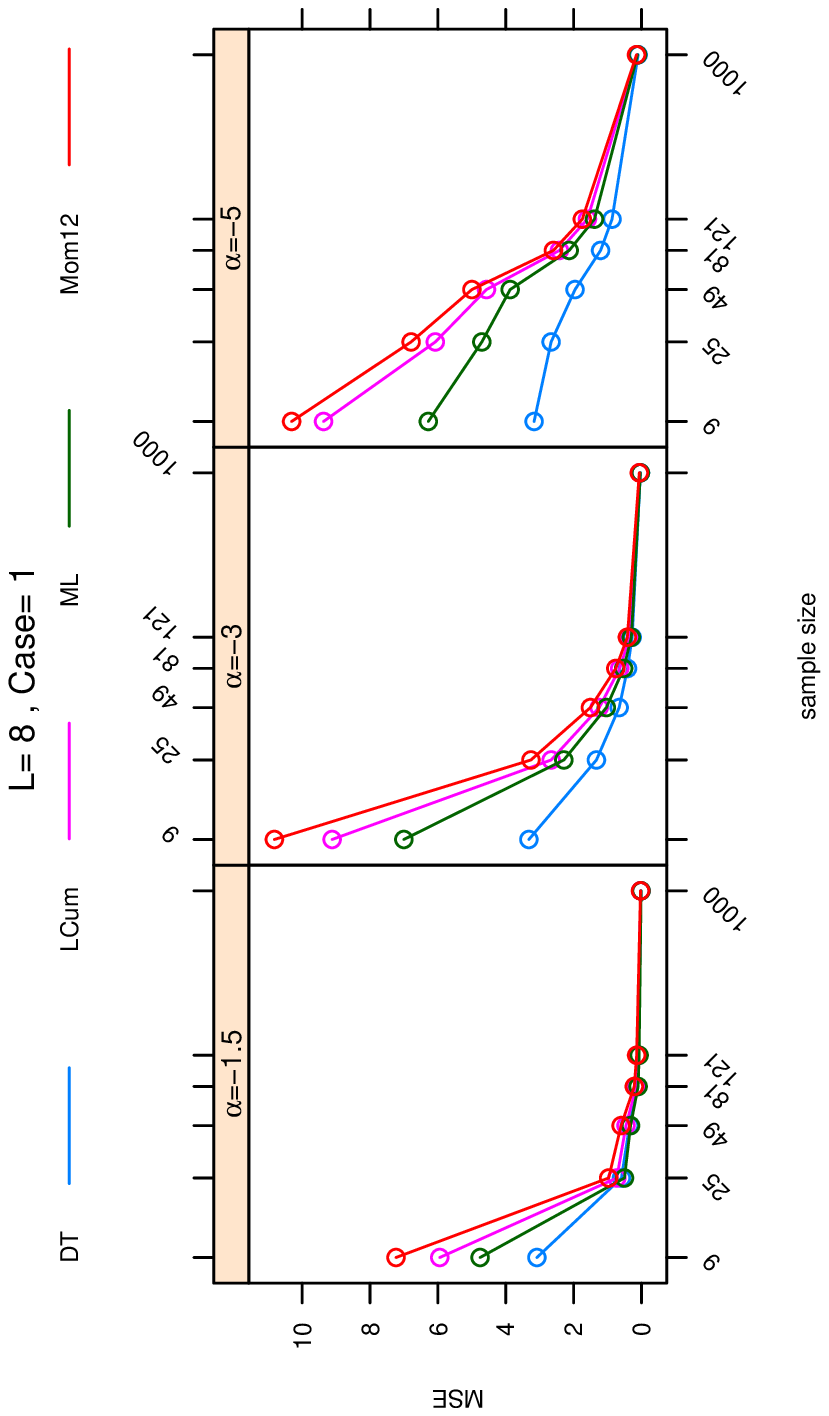}}
 \caption{Sample mean squared error of estimates, Case~1 with $\alpha_2=-15$ and $\epsilon=0.01$. }
 \label{figure:ecmEstimados3}
\end{figure}


Figures~\ref{figure:alfasEstimados4} and~\ref{figure:ecmEstimados4} present, respectively, the sample mean and sample mean squared error of the estimates under Case~2 contamination with $\epsilon=0.001$ and $C=100$.
This type of contamination injects a constant value ($C=100$) with probability $\epsilon=0.001$ instead of each observation from the $\mathcal G_I^0$ distribution.
Since we are considering samples with unitary mean, this is a large contamination.
In this case, $\widehat\alpha_{\text T}$ is, in mean, closer to the true value than the other methods, and its mean square error the smallest.
 
\begin{figure}[htb]
\centering
 {\includegraphics[angle=-90,width=\linewidth]{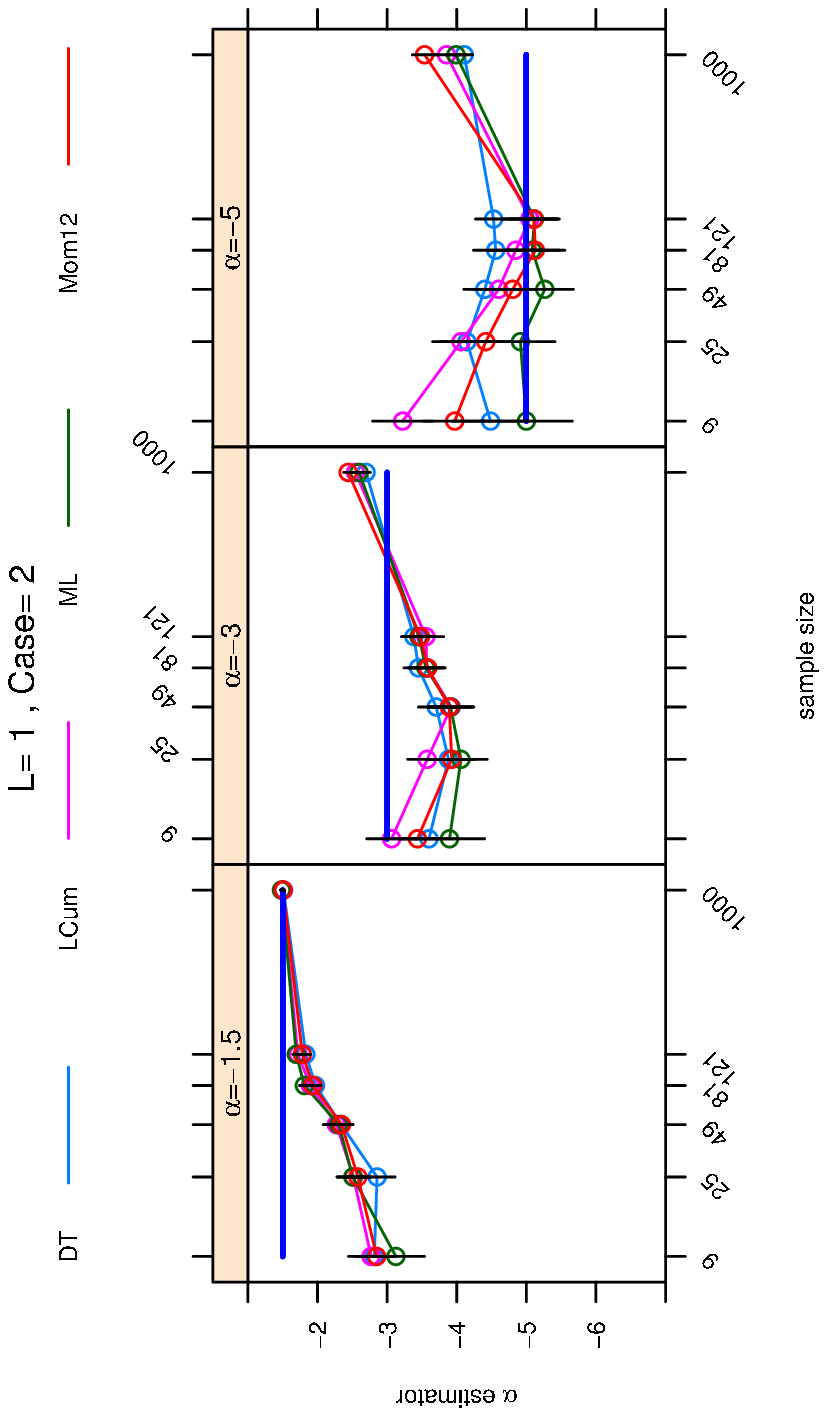} }\vskip 1em
	{\includegraphics[angle=-90,width=\linewidth]{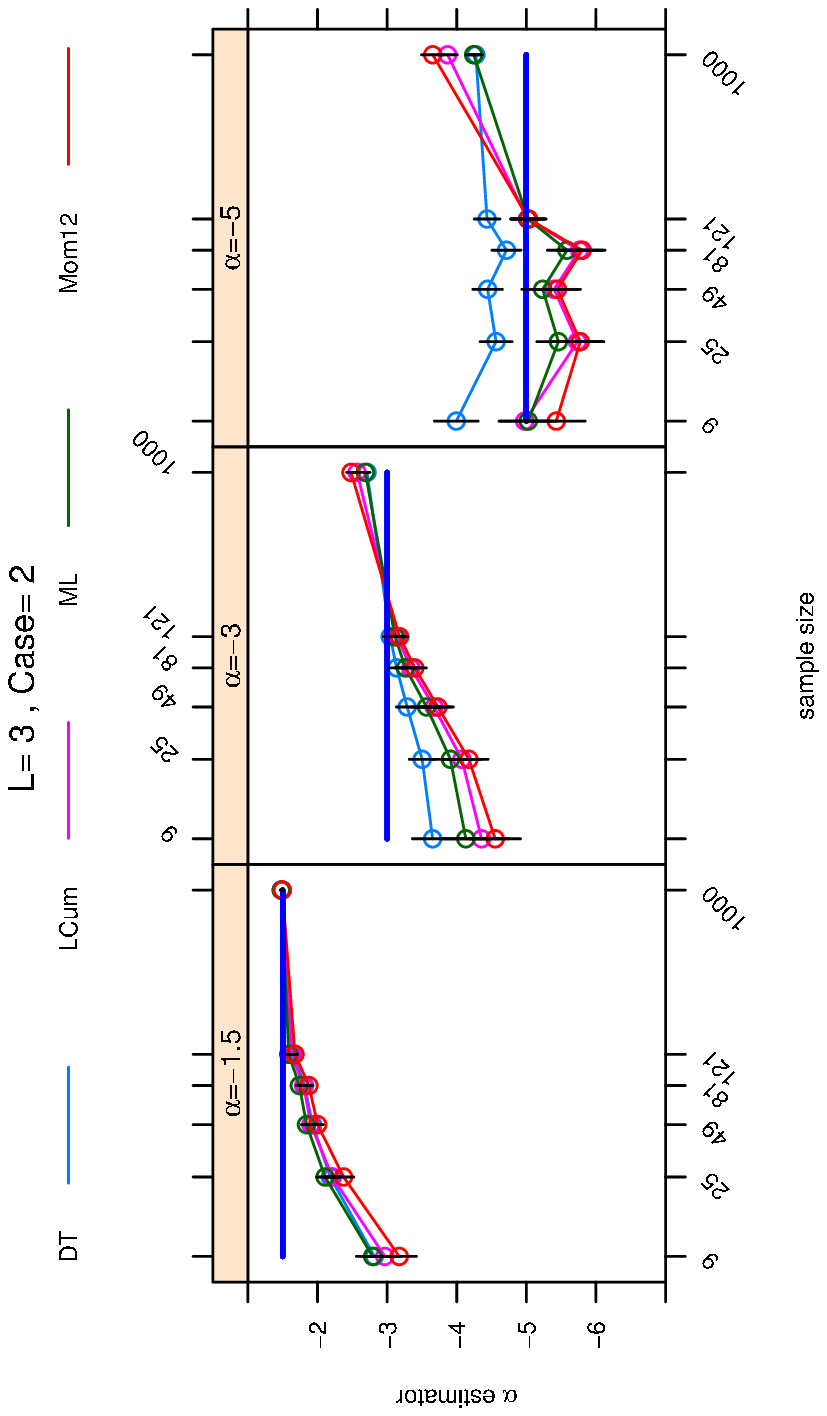}}\vskip 1em
	{\includegraphics[angle=-90,width=\linewidth]{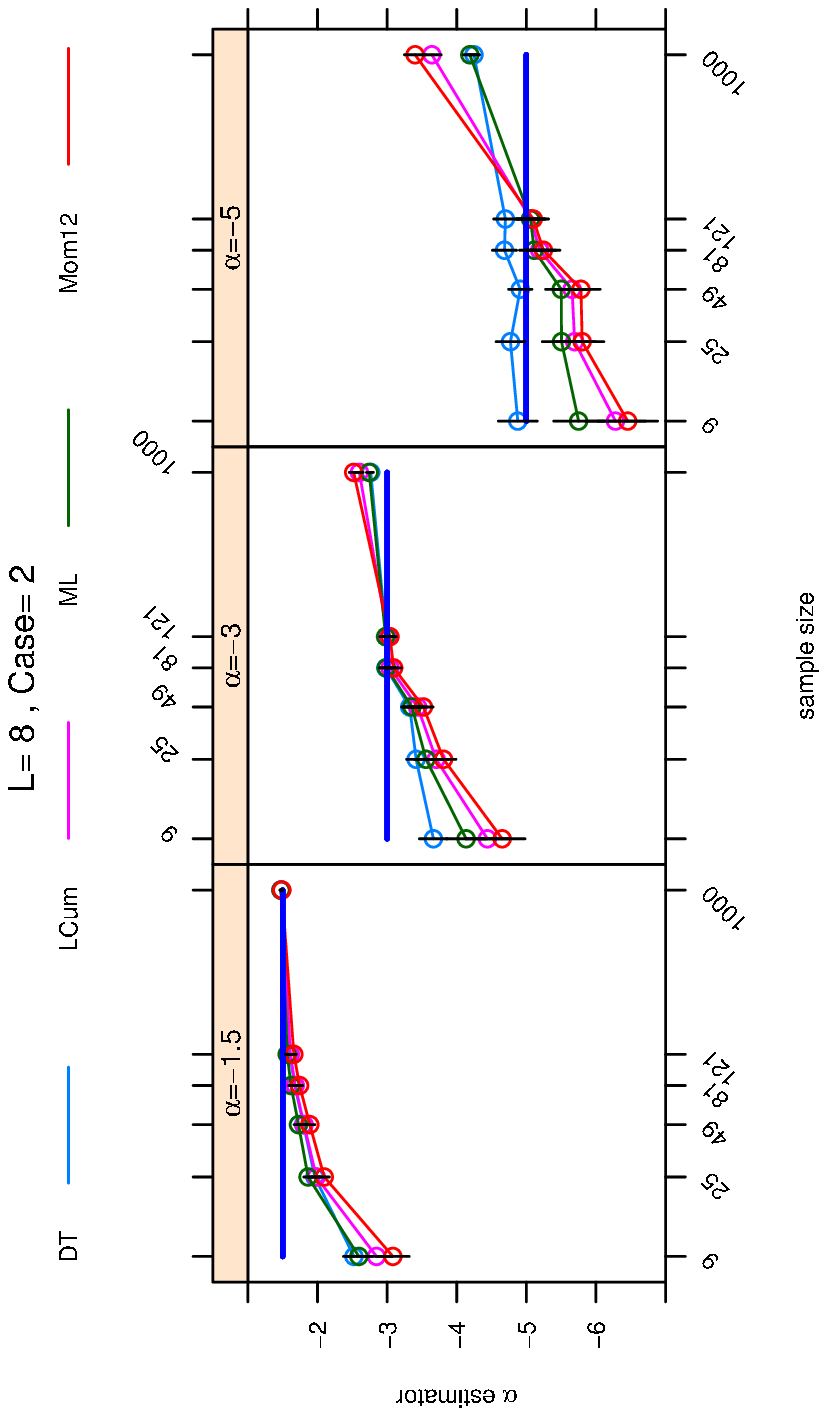}}
 \caption{Sample mean of estimates, Case~2 with $C=100$, $\epsilon=0.001$.}
 \label{figure:alfasEstimados4}
\end{figure}

\begin{figure}[htb]
\centering
 {\includegraphics[angle=-90,width=\linewidth]{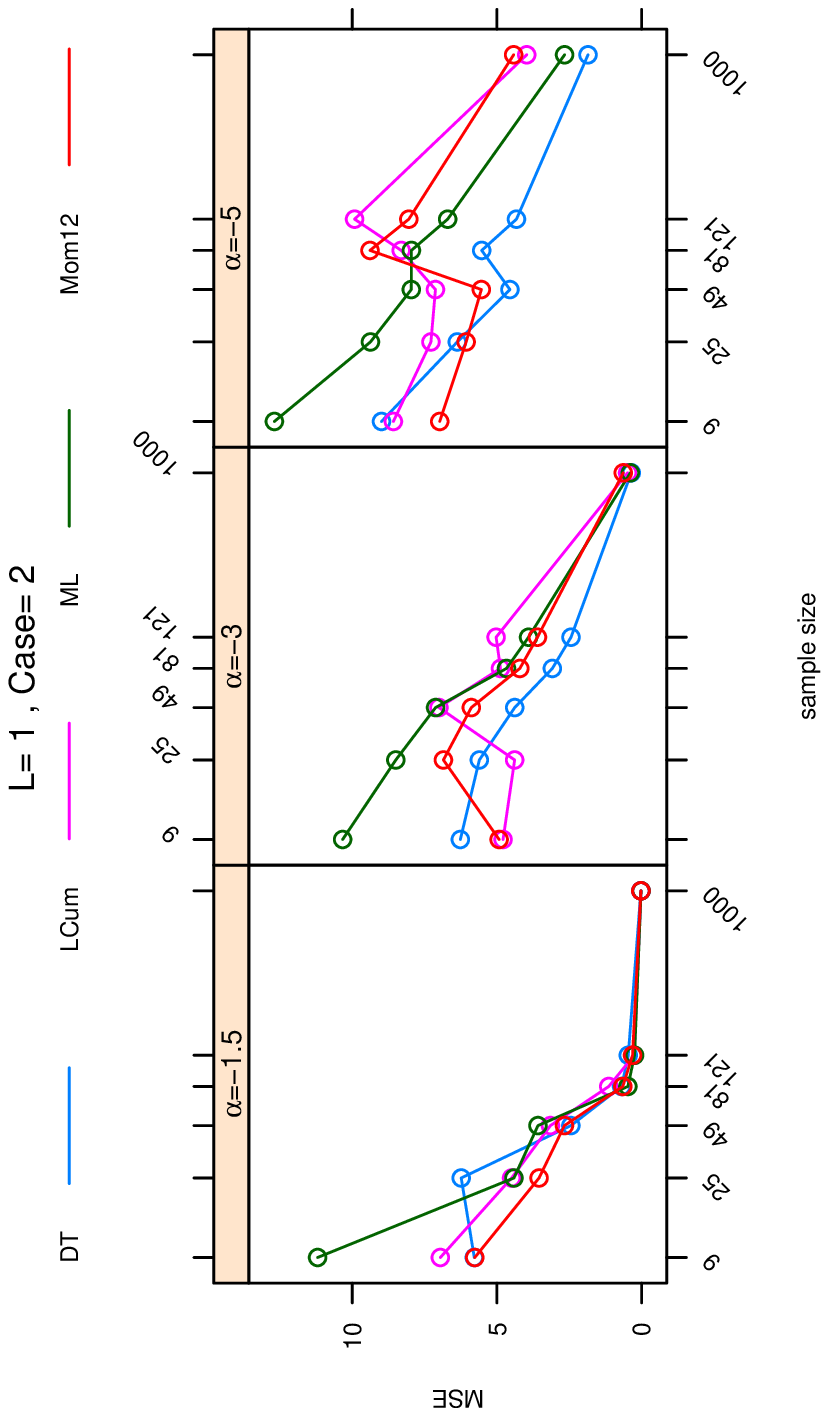} }\vskip 1em
	{\includegraphics[angle=-90,width=\linewidth]{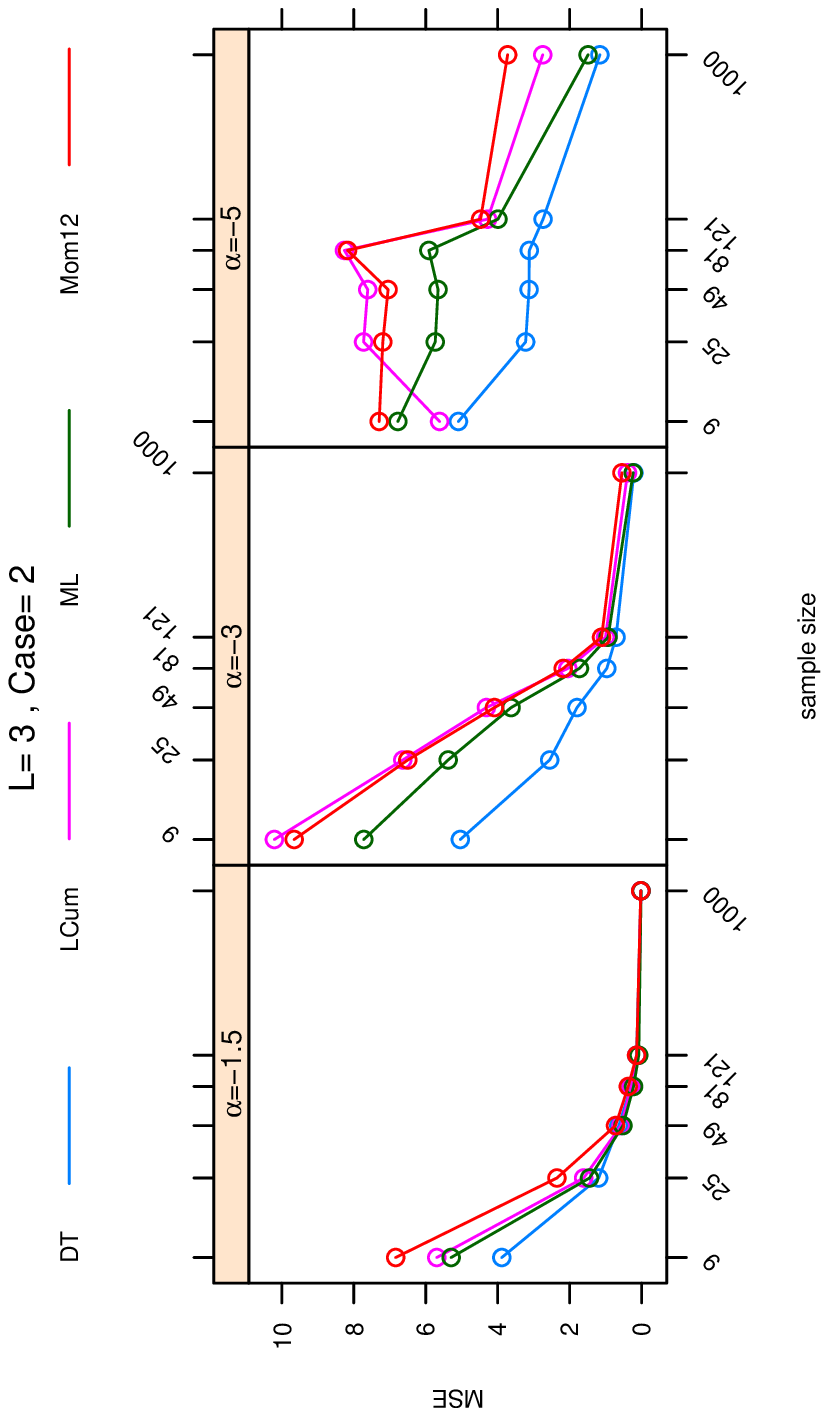}}\vskip 1em
	{\includegraphics[angle=-90,width=\linewidth]{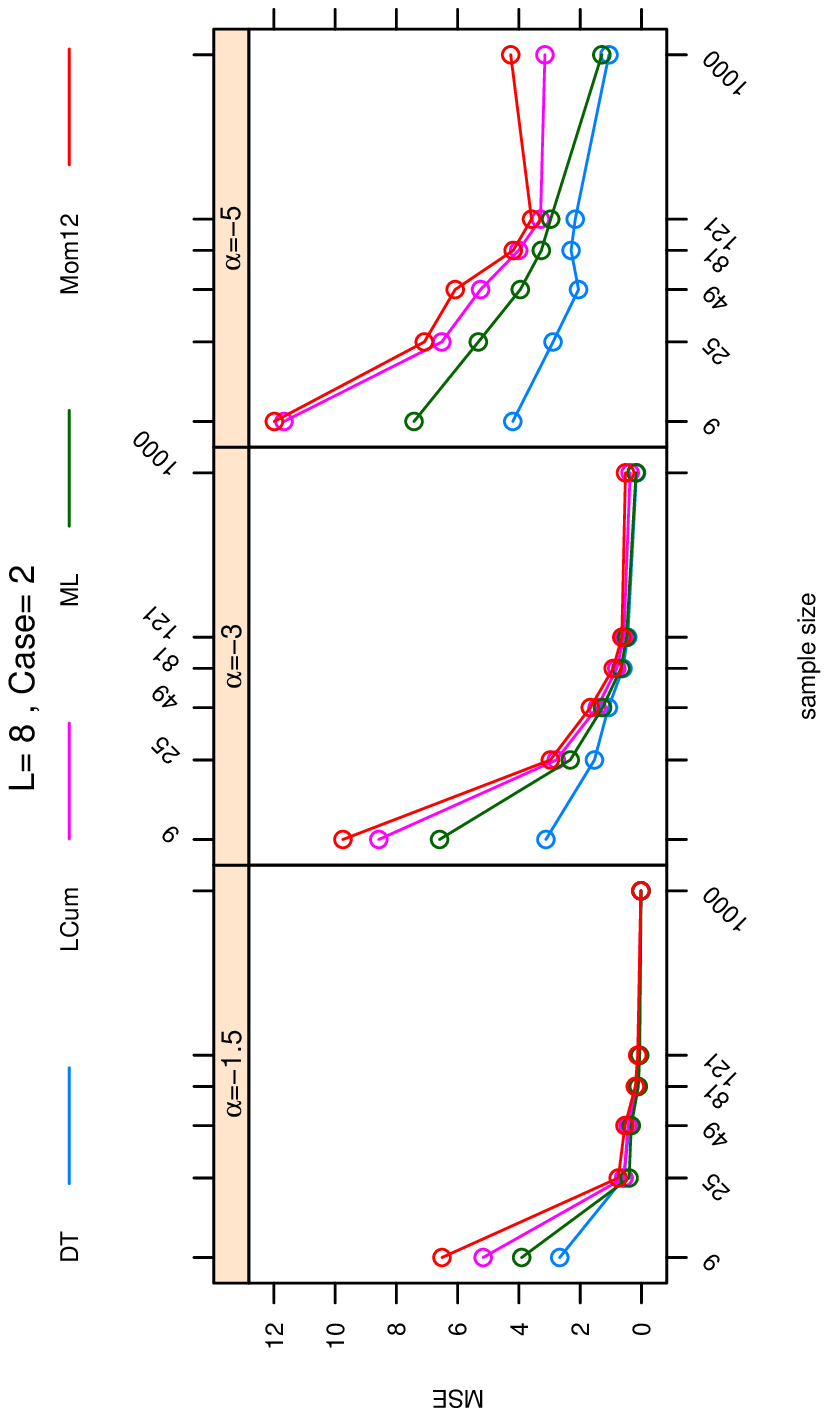}}
 \caption{Sample mean squared error of estimates, Case~2 with $C=100$, $\epsilon=0.001$.}
 \label{figure:ecmEstimados4}
\end{figure}


Figures~\ref{figure:alfasEstimados5} and~\ref{figure:ecmEstimados5} show, respectively, the sample mean and mean squared error of the estimates under Case~3 with $\epsilon=0.005$ with $k=2$.
This kind of contamination draws, with probability $\epsilon=0.005$, an observation from a $\mathcal G_I^0$ distribution with a scale one hundred times larger than that of the ``correct'' model.
The behavior of the estimators follows the same pattern for $L=3,8$: $\widehat\alpha_{\text T}$ produces the closest estimates to the true value with reduced mean squared error.
There is no good estimator for the single-look case with this case of contamination.

\begin{figure}[htb]
\centering
 {\includegraphics[angle=-90,width=\linewidth]{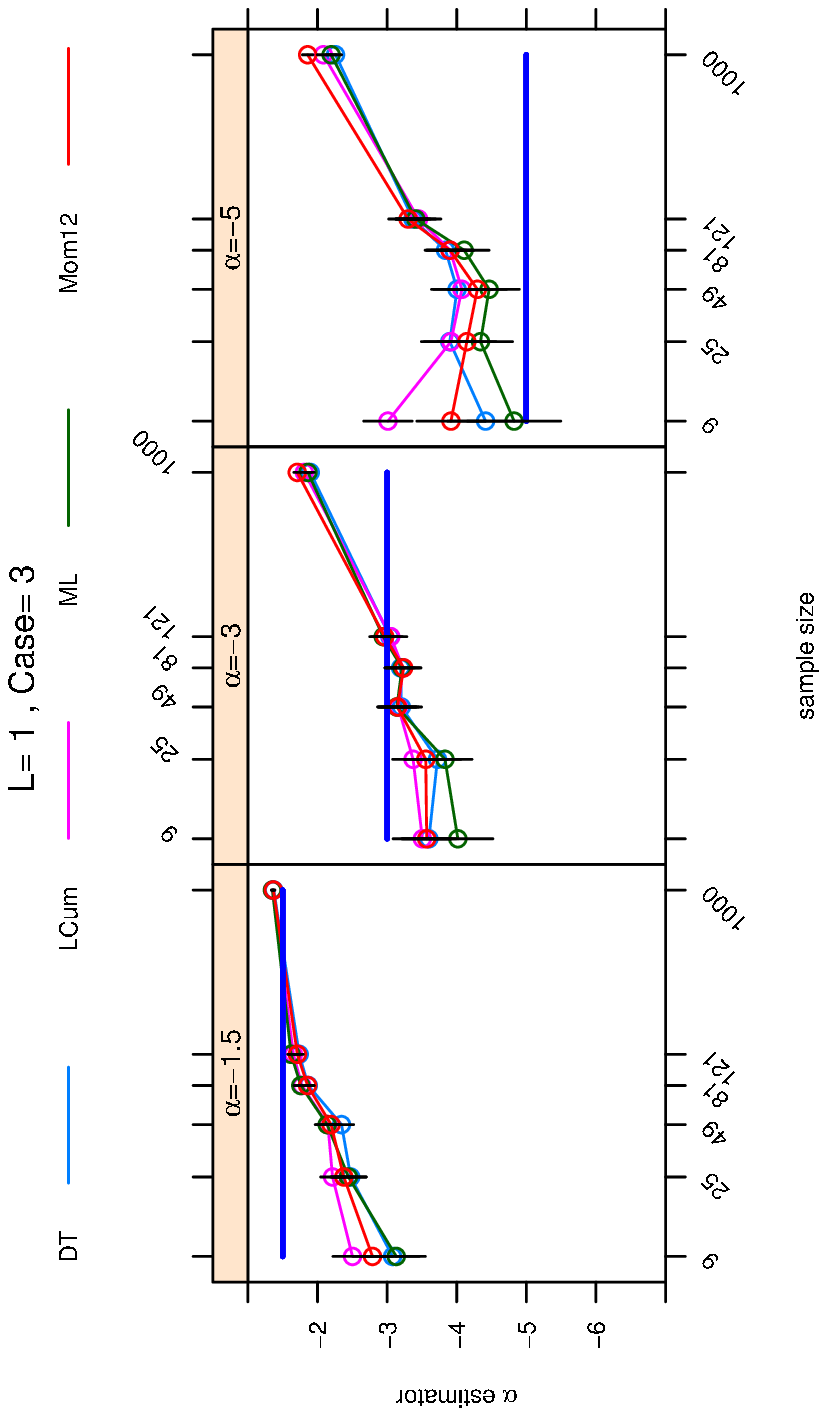} }
	{\includegraphics[angle=-90,width=\linewidth]{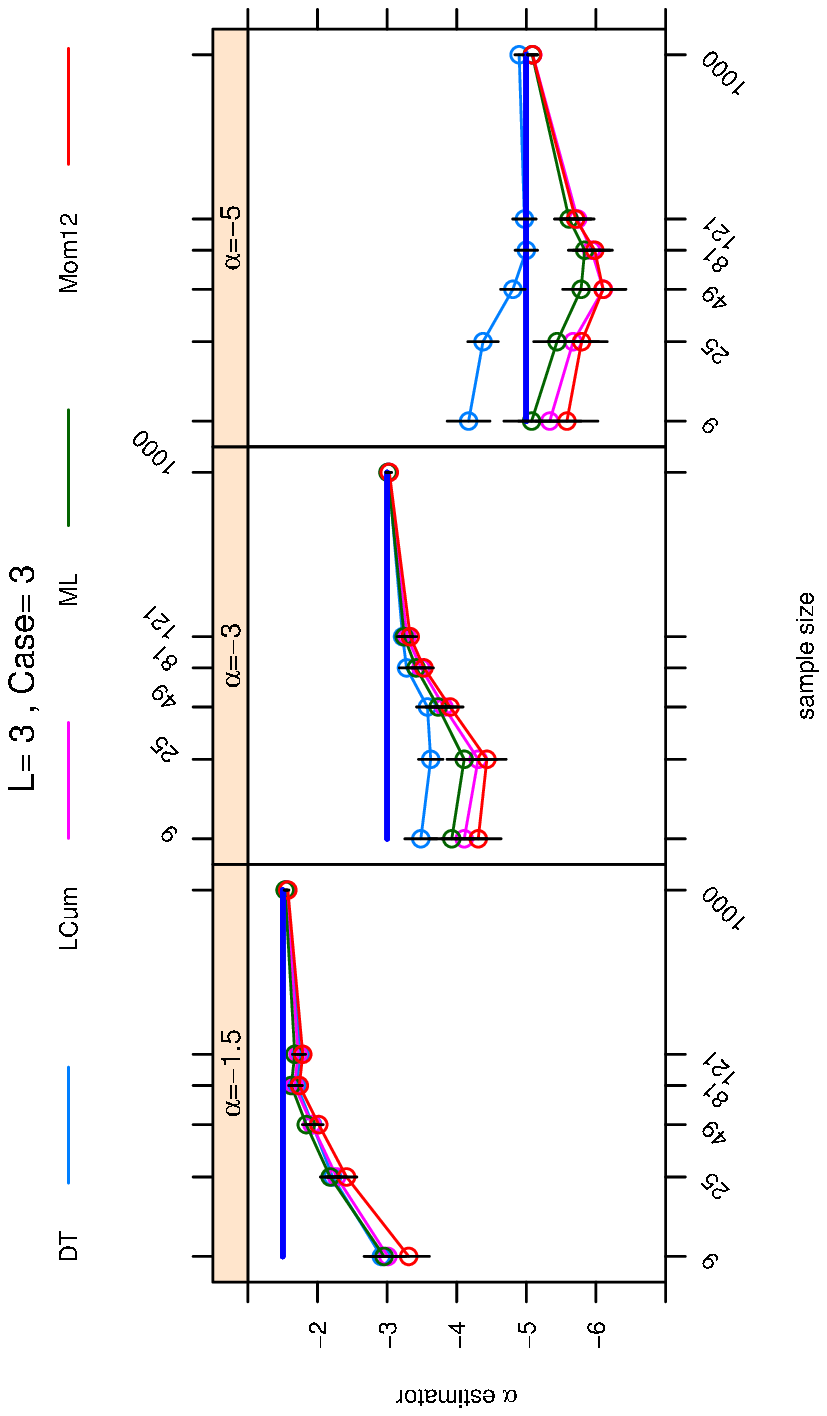}}
	{\includegraphics[angle=-90,width=\linewidth]{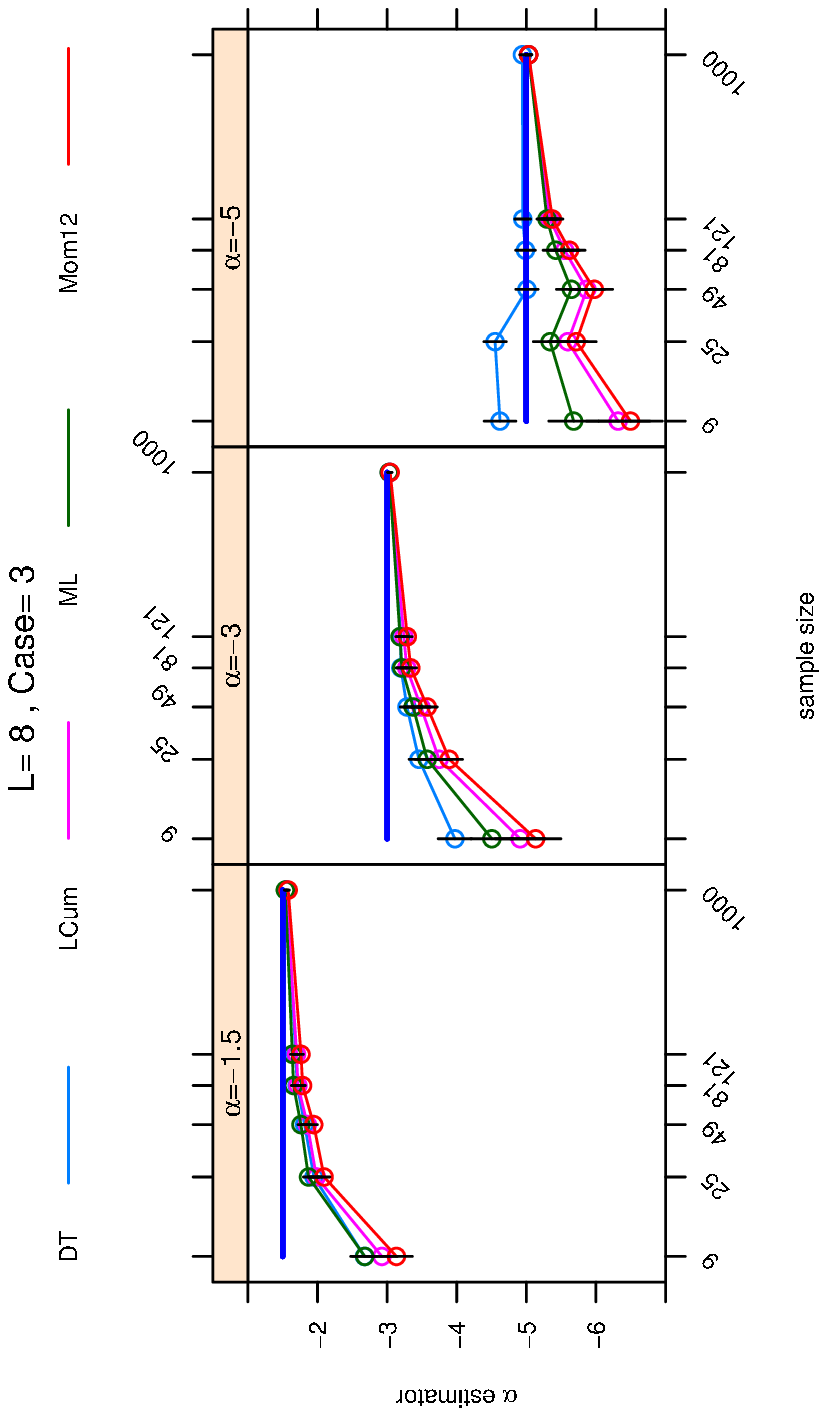}}
 \caption{Sample mean of estimates, Case~3 with $\epsilon=0.005$ and $k=2$.}
 \label{figure:alfasEstimados5}
\end{figure}

\begin{figure}[htb]
\centering
 {\includegraphics[angle=-90,width=\linewidth]{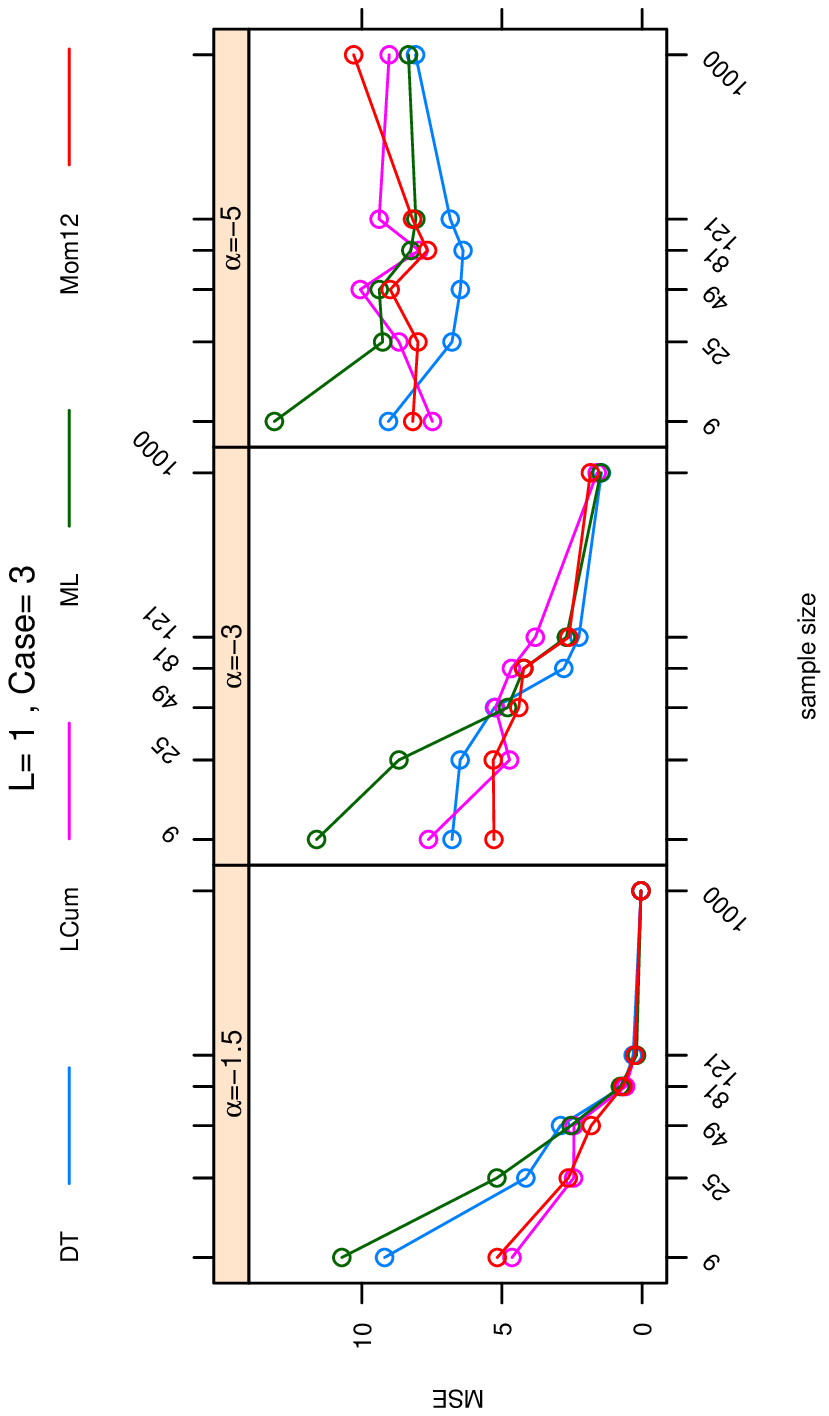} }\vskip 1em
	{\includegraphics[angle=-90,width=\linewidth]{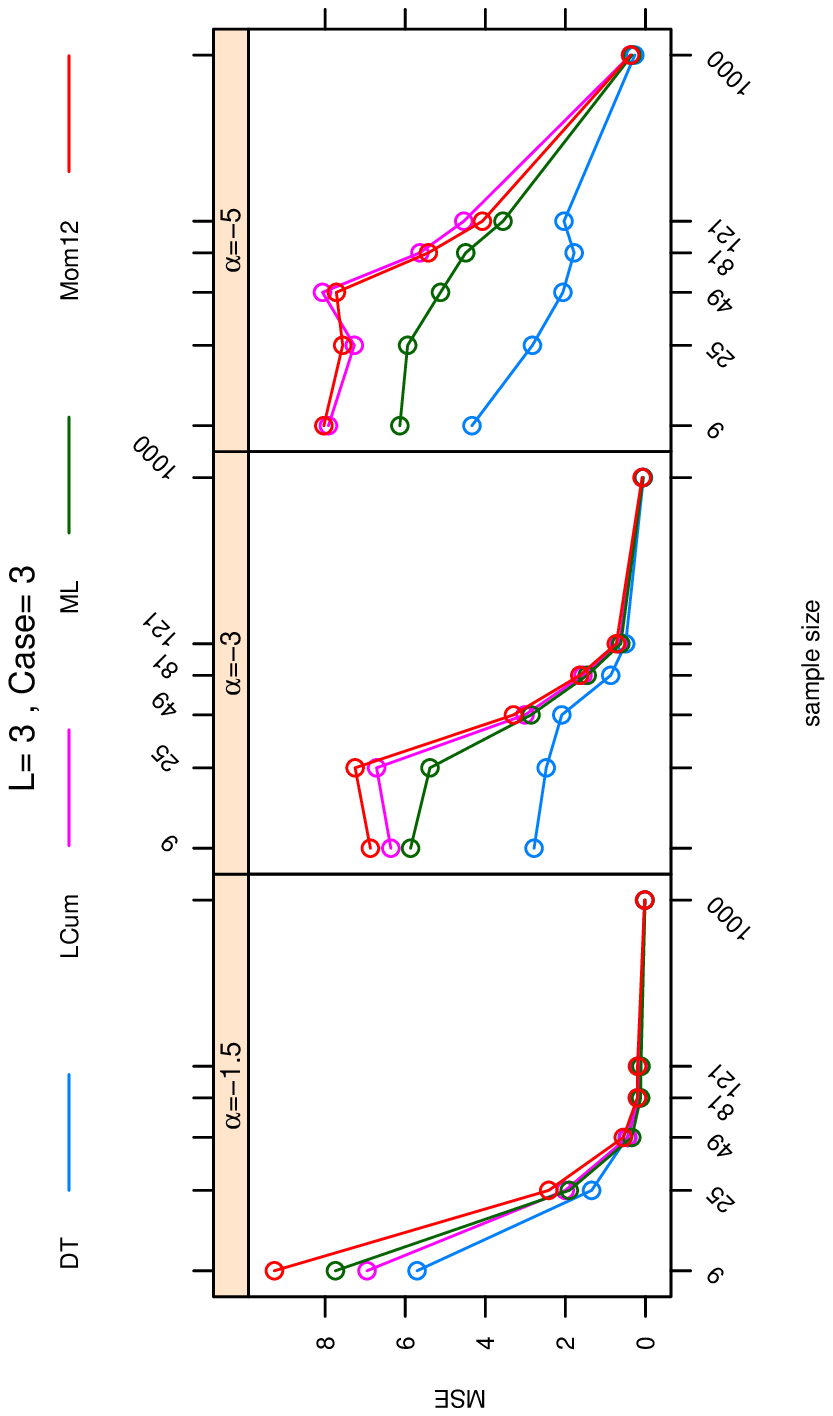}}\vskip 1em
	{\includegraphics[angle=-90,width=\linewidth]{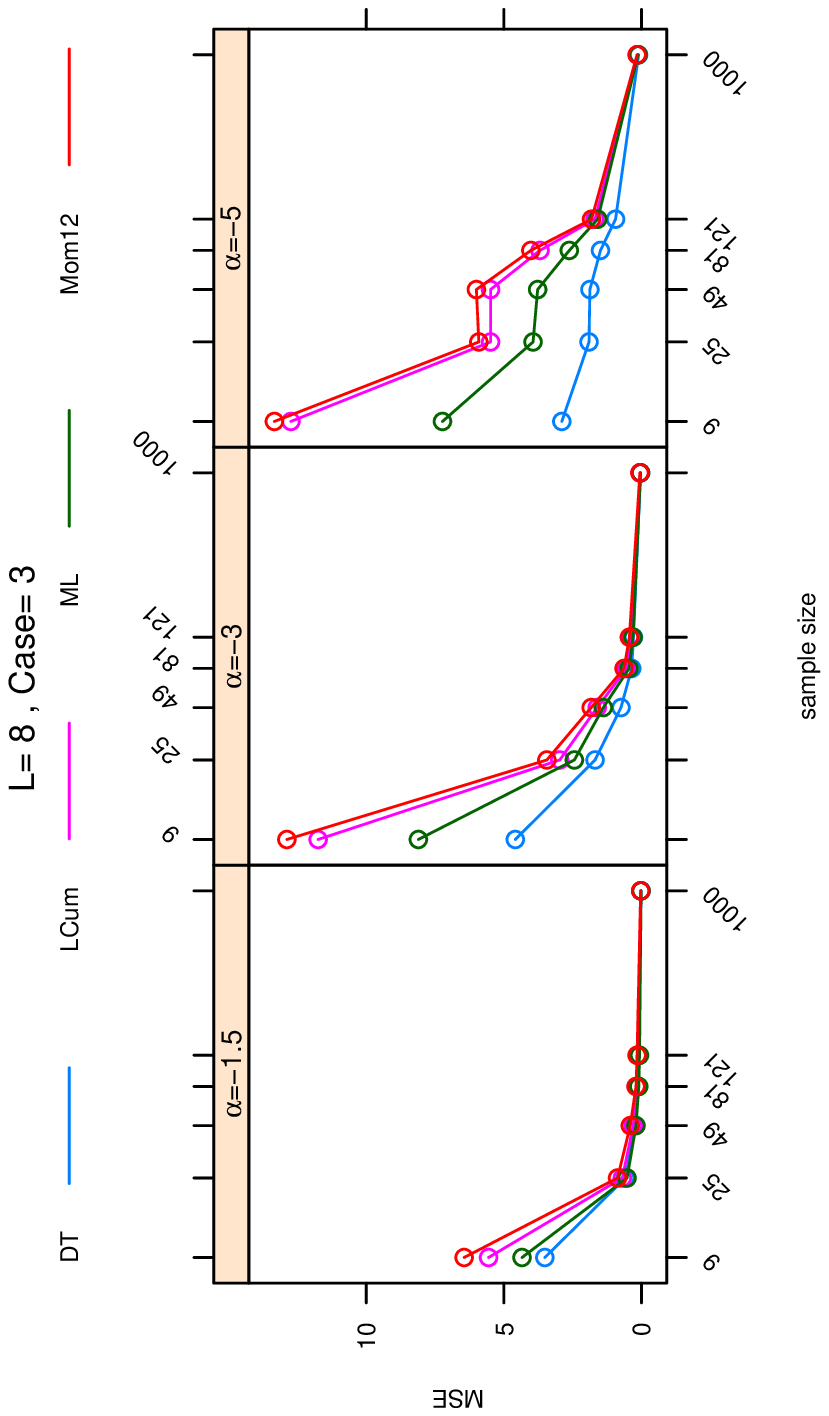}}
 \caption{Sample mean squared error of estimates, Case~3 with $\epsilon=0.005$ and $k=2$.}
 \label{figure:ecmEstimados5}
\end{figure}

As previously mentioned, the iterative algorithms which implement the $\frac{1}{2}$-moment and Log-Cumulant estimators often fail to converge. In the Monte Carlo experiment, if $\frac{1}{2}$-moment or Log-Cumulant fail, we eliminate the estimators computed with the other methods in the corresponding iteration, so  the amount of elements for calculating the mean $\overline{\widehat{\alpha}}$, the bias and the mean squared error is lower than $1000$.
As an example, Table~\ref{tabla_removidos} informs the number of such situations for Case~1 and $\alpha_2 = -15, \epsilon = 0.01$.
These results are consistent with other situations under contamination, and they suggest that these methods are progressively more prone to fail in more heterogeneous areas.

\begin{table}[htb]
\centering
\caption{Percentage of situations for which no convergence was observed for the Moments and Log-Cumulant methods in Case~1, $\alpha_2 = -15, \epsilon = 0.01$}\label{tabla_removidos}
\begin{tabular}{crr}
\toprule
 $L$ & $\frac{1}{2}$-Moment  & Log-Cumulant \\
 \midrule
$1$ & $22.87$  & $21.56$ \\
$3$ & $11.71$  &  $11.87$ \\
$8$ & $5.81$ & $6.04$  \\
\bottomrule
\end{tabular}
\end{table}

Data from a single-look L-band HH polarization in intensity format E-SAR~\cite{ESAR} image was used in the following.
Figure~\ref{figure:reales2} shows the regions used for estimating the texture parameter.
Table~\ref{tablaResultadosMunich} shows the results of estimating the $\alpha$ parameter for each rectangular region, where $NA$ means that the corresponding estimator is not available.

\begin{figure}[hbt]
\centering
\includegraphics[viewport=200 0 450 500, clip=true, angle=-90, width=\linewidth]{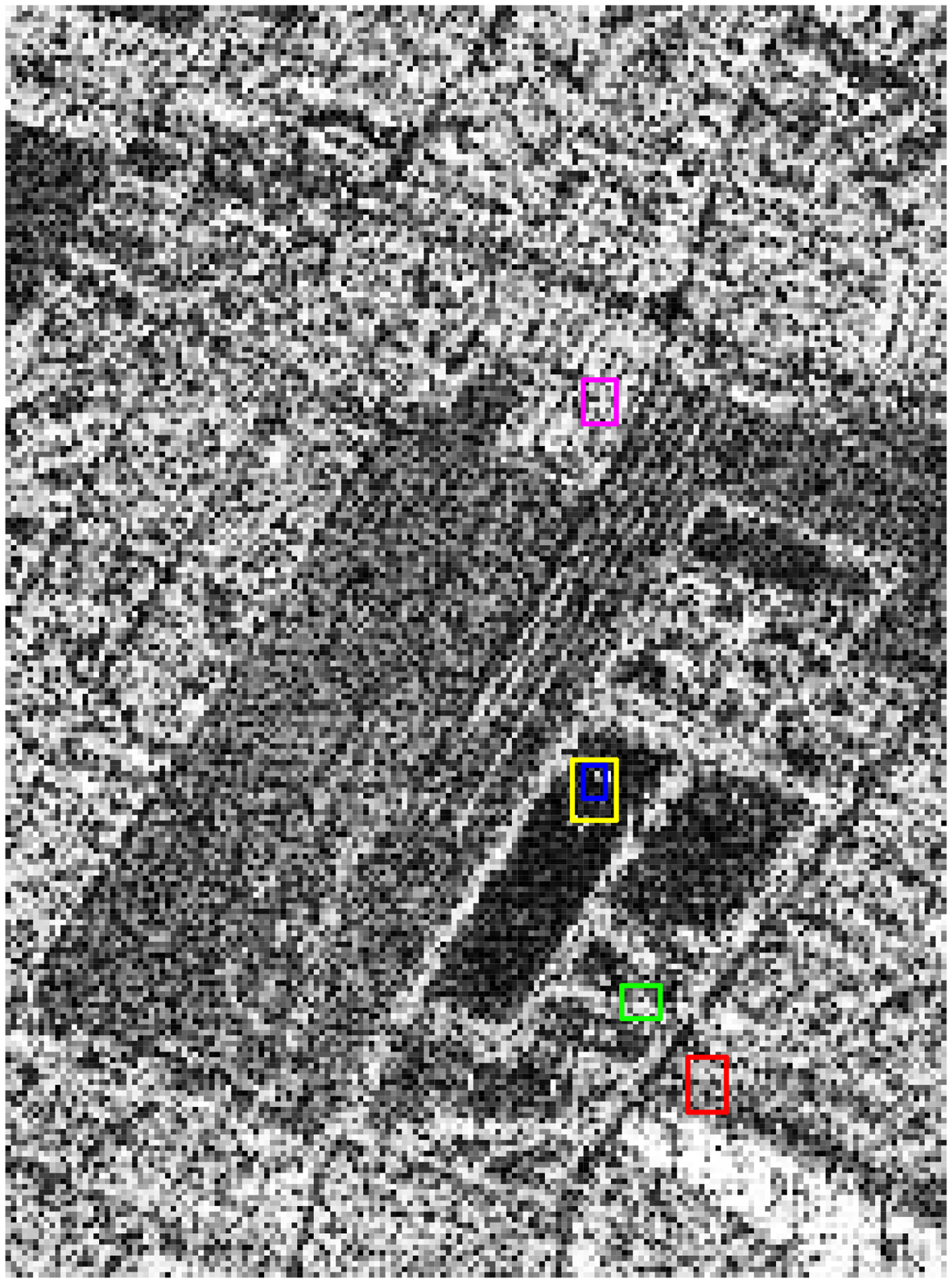}
\caption{Real SAR image and the regions used to estimate the $\alpha$-parameter.}\label{figure:reales2}
\end{figure}


The Kolmogorov-Smirnov test (KS-test) is applied to two samples: $\bm x$, from the image, and $\bm y$, simulated.
Babu and Feigelson~\cite{Jogesh2006} warn about the use of the same sample for parameter estimation and for performing a KS test between the data and the estimated cumulative distribution function.
We then took the real sample $\bm x$, used to estimate the parameters with the four methods under assessment, and then samples of the same size $\bm y$ were drawn from the $\mathcal G_I^0$ law with those parameters.
The KS test was then performed between $\bm x$ and $\bm y$ for the null hypothesis $H_0$ ``both samples come from the same distribution,'' and the complementary alternative hypothesis.

Table~\ref{resultadosTestMunich} shows the sample $p$-values. 
It is observed that the null hypothesis is not rejected with a significance level of $5$\% in any of the cases.
This result justifies the adequacy of the model for the data.


Fig.~\ref{figure:reales3} shows the regions used for estimating the texture parameter under the influence of a corner reflector. 
Table~\ref{resultadosCorner} shows the estimates of $\alpha$ for each rectangular region in the image of Fig.~\ref{Cornerregiones}.
The Maximum Likelihood, $\frac{1}{2}$-moment and Log-Cumulant estimators are unable to produce any estimate in small samples.
They only produce sensible values when the sample is of at least $90$ observations, 
The Log-Cumulant method is the one which requires the largest sample size to produce an acceptable estimate.
The estimator based on the Triangular distance yields plausible values under contamination even with very small samples.

\begin{figure}[htb]
\centering
\subfigure[Single-look E-SAR image with a corner reflector.]{\includegraphics[angle =90,width=.8\linewidth]{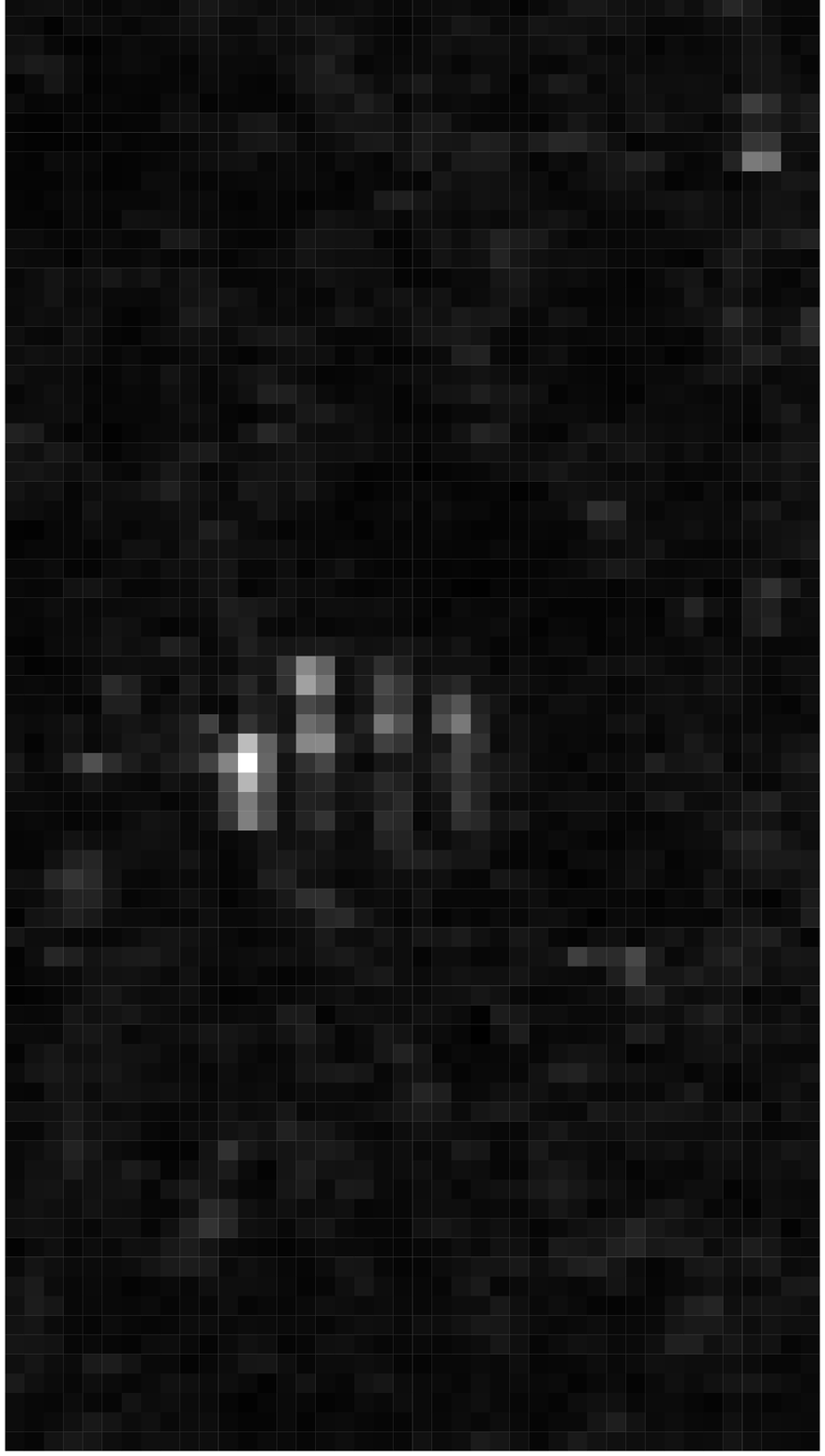}}
\subfigure[Regions of interest.]{\includegraphics[angle =90,width=.8\linewidth,]{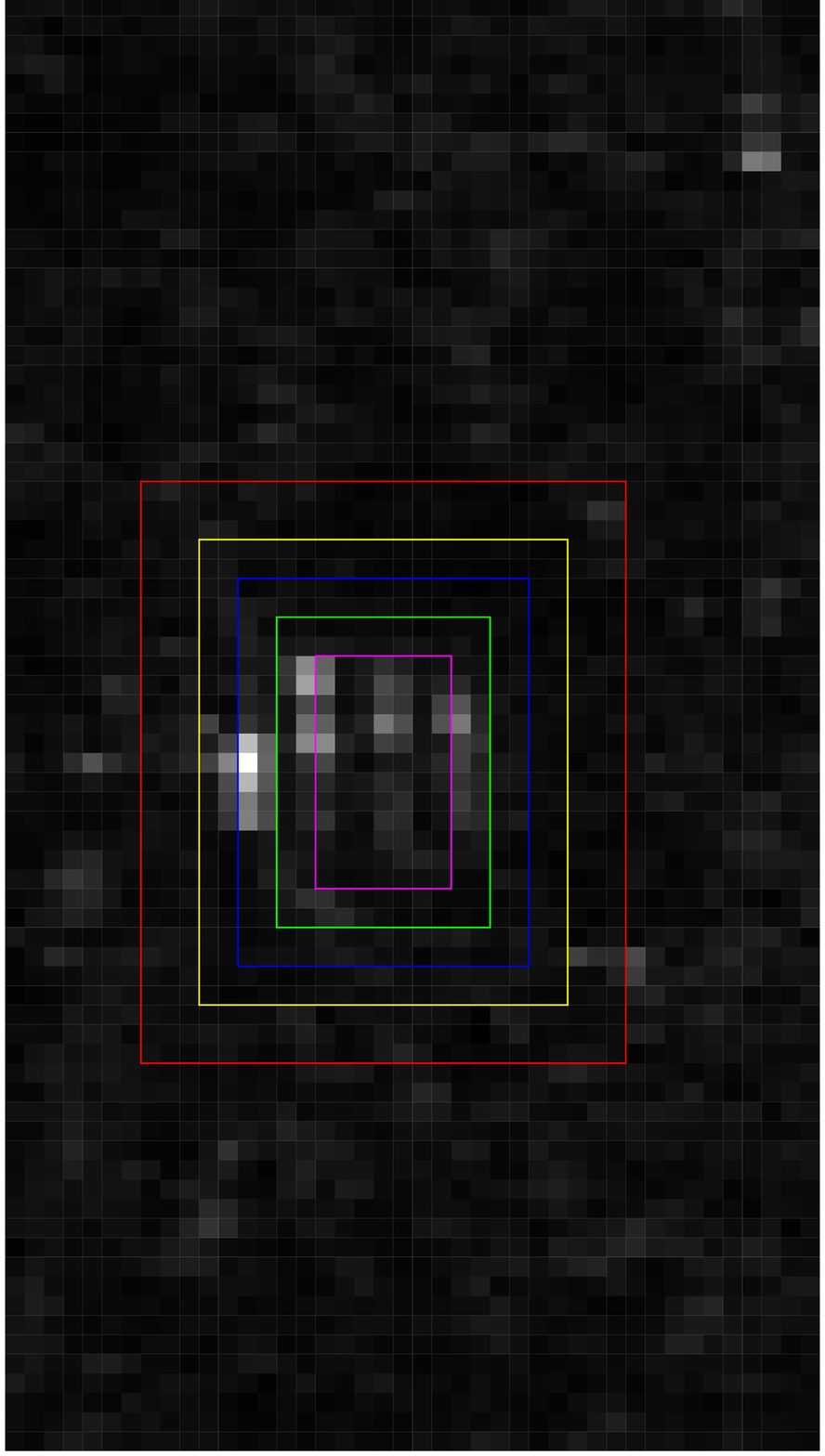}\label{Cornerregiones}}
\caption{Samples of several sizes in a real SAR image with a corner reflector, used to estimate the $\alpha$-parameter.}\label{figure:reales3}
\end{figure}


Table~\ref{pvaluesTrueNestedSamples} presents the $p$-values of the KS test, applied as previously described to the samples of Fig.~\ref{Cornerregiones}.
Estimators based on the $\frac{1}{2}$-moments and on Log-Cumulants failed to produce estimates in two samples and, therefore, it was not possible to apply the procedure.
The other samples did not fail to pass the KS test, except for the Blue sample and the Maximum Likelihood estimator.
These results leads us to conclude that the underlying model is a safe choice regardless the presence of contamination and the estimation procedure.


\section{Conclusions}
\label{conclusions}

We proposed a new estimator for the texture parameter of the $\mathcal{G}_I^0$ distribution based on the minimization of a stochastic distance between the model (which is indexed by the estimator) and an estimate of the probability density function built with asymmetric kernels.
We defined three models of contamination inspired in real situations in order to assess the impact of outliers in the performance of the estimators.
The difficulties of estimating the textured parameter of the $\mathcal{G}_I^0$ distribution were further investigated and justified throughout new theoretical results regarding its heavytailedness.

Regarding the impact of the contamination on the performance of estimators, we observed the following.
Only Fractional Moment and Log-Cumulant estimators fail to converge.
\begin{itemize}
\item Case~1: Regardless the intensity of the contamination, the bigger the number of looks, the smaller the percentage of situations for which no convergence if achieved.
\item Cases~2 and~3: the percentage of situations for which there is no convergence increases with the level of contamination, and reduces with $\alpha$.
\end{itemize}
In the single-case look the proposed estimator does not present excellent results, but it never fails to converge whereas the others are prone to produce useless output.

The new estimator presents good properties as measured by its bias and mean squared error. 
It is competitive with the Maximum Likelihood, Fractional Moment and Log-Cumulant estimators in situations without contamination, and outperforms the other techniques even in the presence of small levels of contamination.

For this reason, it would be advisable to use $\widehat{\alpha}_{\text{T}}$ in every situation, specially when small samples are used and/or when there is the possibility of having contaminated data.
The extra computational cost incurred in using this estimator is, at most, the twenty times the required to compute $\widehat{\alpha}_{\text{ML}}$, but its advantages outnumber these extra computer cycles.

\appendix

Simulations were performed using the \texttt R language and environment for statistical computing~\cite{R} version~3.0.2.
The \texttt{adaptIntegrate} function from the \texttt{cubature} package was used to perform the numerical integration required to evaluate the Triangular distance, the algorithm utilized is an adaptive multidimensional integration over hypercubes. 

In order to numerically find $\widehat\alpha_{\text{LCum}}$  we used the function \texttt{uniroot} implemented in \texttt R to solve~\eqref{eq:logm}.

The computer platform is Intel(R) Core~i7, with \unit[$8$]{GB} of memory and $64$ bits Windows~7.
Codes and data are available upon request from the corresponding author.

\begin{table*}[htb] 
\caption{Estimations of the $\alpha$-parameter using the samples shown in Fig.~\ref{figure:reales2}}
\label{tablaResultadosMunich}
\centering
\begin{tabular}{c*9{r}}
\toprule
 Color & size & $\widehat\alpha_{\text MV}$ &$\widehat\alpha_{\text T}$ & $\widehat\alpha_{\text{Mom12}}$ & $\widehat\alpha_{\text{LCum}}$ & time MV & time DT & time $\frac12$Mom & time LCum \\
\midrule
Magenta & $100$  &	$-1.9$ & $-2.7$    & $-1.9$   & $-1.7$  & $0.03$  & $5.85$ &$0.03$&$0.02$  \\
Green & $48$ &$-2.5$& $-2.5 $& $-2.9$& $-3.1$ & $0.00$ & $3.31$ &$ 0.00$ & $0.00$\\
Blue & $25$ & $-4.9$ &$-3.0$ &  $NA$ & $NA$ & $0.00$& $2.08$ &$0.00$ & $0.00$\\
Yellow &$90$ &$-6.2$ &$-5.1$ &$-6.6$& $-6.8$& $0.00$ &$5.16$&$0.00$ &$ 0.00$\\
 Red& $64$ & $-1.8$ &$-1.9$ &$-1.9$ & $-1.8$ &$ 0.00 $&$4.17$ &$0.00$ &$ 0.00$\\
\bottomrule
\end{tabular}
\end{table*}

\begin{table}[htb] 
\caption{Sample $p$-values of the Kolmogorov-Smirnov test with samples from the image in Fig.~\ref{figure:reales2}  }
\centering
\begin{tabular}{c*4{r}}
\toprule
& \multicolumn{4}{c}{$p$-value}\\ \cmidrule(rl){2-5}
Color & TestMV    & TestDT    & TestMom  & TestLCum \\
\midrule
Magenta &	$0.46$&$ 0.58$& $0.28$& $0.69$\\
Green & $0.37$&   $0.85$& $0.37$&$ 0.37$\\
Blue & $0.15$& $0.07$&            $  NA  $  &         $   NA $\\
Yellow &$0.63$&$0.98$&$ 0.76$&$ 0.22$\\
Red & $0.30$& $ 0.30$&    $0.21$& $0.99$\\
\bottomrule
\end{tabular}
\label{resultadosTestMunich}
\end{table}

\begin{table*}[htb] 
\centering
\caption{Estimations of the $\alpha$-parameter using the samples shown in Fig.~\ref{Cornerregiones}}
\label{resultadosCorner}
\begin{tabular}{c*{14}{r}}
\toprule
 Color & size & $\widehat\alpha_{\text MV}$ & $\widehat\alpha_{\text T}$ & $\widehat\alpha_{\text Mom12}$ & $\widehat\alpha_{\text LCum}$ & timeMV &timeDT & timeMom12 & timeLogcum12 \\
\midrule
Magenta & $15$& $-20.0 $ &$-4.1 $ &$NA $  & $NA $ &  $  0.03$ &   $ 1.95 $   &  $ 0.03$     &   $  0.03$ \\ 
Green & $42$ & $-9.2$ & $-5.0$ &$NA$ &  $ NA $ &  $  0.00$  &  $ 4.04$  &   $ 0.00$   &  $ 0.00$\\
Blue & 90 & -3.5 & -2.7 &-4.7& -14.2 &    0.02   &  4.85    &   0.00   &       0.00\\
Yellow &$156 $&$-2.2$ & $-1.8$&$ -2.6 $& $ -3.4 $ &  $ 0.01$ &  $  8.35 $  & $   0.00$     &    $ 0.00$\\
Red & $225$ & $-1.9 $ &$-1.7$ &$-2.1$  &$ -2.5 $  & $  0.02$  & $ 10.97$  &    $ 0.00$     &   $  0.00$\\
\bottomrule
\end{tabular}
\end{table*}

\begin{table}[htb] 
\centering
\caption{Sample $p$-values of the Kolmogorov-Smirnov test using the samples shown in Fig.~\ref{Cornerregiones}}
\label{pvaluesTrueNestedSamples}
\begin{tabular}{c*4{r}}
\toprule
& \multicolumn{4}{c}{$p$-value}\\ \cmidrule(rl){2-5}
 Color & TestMV    & TestDT    & TestMom  & TestLogCum\\
\midrule
Magenta&$ 0.38$& $ 0.93 $     &         $  NA  $     &        $  NA$\\
Green & $ 0.11$&$ 0.93$     &          $  NA $     &        $   NA$\\
Blue&$ 0.01 $& $0.11 $& $ 0.400$ & $ 0.11$\\
Yellow &$0.19$ &$0.31$&$ 0.460 $& $0.15$\\
Red & $0.23$ &$0.12$&$ 0.008$ &$0.02$\\
\bottomrule
\end{tabular}
\end{table}

\bibliographystyle{IEEEtran}
\bibliography{ReferencesJSTARSKernel}

\begin{thebibliography}{10}
\providecommand{\url}[1]{#1}
\csname url@samestyle\endcsname
\providecommand{\newblock}{\relax}
\providecommand{\bibinfo}[2]{#2}
\providecommand{\BIBentrySTDinterwordspacing}{\spaceskip=0pt\relax}
\providecommand{\BIBentryALTinterwordstretchfactor}{4}
\providecommand{\BIBentryALTinterwordspacing}{\spaceskip=\fontdimen2\font plus
\BIBentryALTinterwordstretchfactor\fontdimen3\font minus
  \fontdimen4\font\relax}
\providecommand{\BIBforeignlanguage}[2]{{%
\expandafter\ifx\csname l@#1\endcsname\relax
\typeout{** WARNING: IEEEtran.bst: No hyphenation pattern has been}%
\typeout{** loaded for the language `#1'. Using the pattern for}%
\typeout{** the default language instead.}%
\else
\language=\csname l@#1\endcsname
\fi
#2}}
\providecommand{\BIBdecl}{\relax}
\BIBdecl

\bibitem{Frery97}
A.~C. Frery, H.-J. M{\"u}ller, C.~C.~F. Yanasse, and S.~J.~S. Sant'Anna, ``A
  model for extremely heterogeneous clutter,'' \emph{IEEE Transactions on
  Geoscience and Remote Sensing}, vol.~35, no.~3, pp. 648--659, 1997.

\bibitem{MejailJacoboFreryBustos:IJRS}
M.~Mejail, J.~C. Jacobo-Berlles, A.~C. Frery, and O.~H. Bustos,
  ``Classification of {SAR} images using a general and tractable multiplicative
  model,'' \emph{International Journal of Remote Sensing}, vol.~24, no.~18, pp.
  3565--3582, 2003.

\bibitem{Gao2010}
G.~Gao, ``Statistical modeling of {SAR} images: A survey,'' \emph{Sensors},
  vol.~10, no.~1, pp. 775--795, 2010.

\bibitem{VasconcellosFrerySilva:CompStat}
K.~L.~P. Vasconcellos, A.~C. Frery, and L.~B. Silva, ``Improving estimation in
  speckled imagery,'' \emph{Computational Statistics}, vol.~20, no.~3, pp.
  503--519, 2005.

\bibitem{SilvaCribariFrery:ImprovedLikelihood:Environmetrics}
M.~Silva, F.~Cribari-Neto, and A.~C. Frery, ``Improved likelihood inference for
  the roughness parameter of the {GA0} distribution,'' \emph{Environmetrics},
  vol.~19, no.~4, pp. 347--368, 2008.

\bibitem{CribariFrerySilva:CSDA}
F.~Cribari-Neto, A.~C. Frery, and M.~F. Silva, ``Improved estimation of clutter
  properties in speckled imagery,'' \emph{Computational Statistics and Data
  Analysis}, vol.~40, no.~4, pp. 801--824, 2002.

\bibitem{AllendeFreryetal:JSCS:05}
H.~Allende, A.~C. Frery, J.~Galbiati, and L.~Pizarro, ``M-estimators with
  asymmetric influence functions: the {GA0} distribution case,'' \emph{Journal
  of Statistical Computation and Simulation}, vol.~76, no.~11, pp. 941--956,
  2006.

\bibitem{BustosFreryLucini:Mestimators:2001}
O.~H. Bustos, M.~M. Lucini, and A.~C. Frery, ``M-estimators of roughness and
  scale for {GA0}-modelled {SAR} imagery,'' \emph{EURASIP Journal on Advances
  in Signal Processing}, vol. 2002, no.~1, pp. 105--114, 2002.

\bibitem{MellinAnalysisPolSAR}
S.~Anfinsen and T.~Eltoft, ``Application of the matrix-variate {M}ellin
  transform to analysis of polarimetric radar images,'' \emph{IEEE Transactions
  on Geoscience and Remote Sensing}, vol.~49, no.~6, pp. 2281--2295, 2011.

\bibitem{BujorTrouveValetNicolas2004}
F.~Bujor, E.~Trouve, L.~Valet, J.-M. Nicolas, and J.-P. Rudant, ``Application
  of log-cumulants to the detection of spatiotemporal discontinuities in
  multitemporal {SAR} images,'' \emph{IEEE Transactions on Geoscience and
  Remote Sensing}, vol.~42, no.~10, pp. 2073--2084, 2004.

\bibitem{khan2014}
S.~Khan and R.~Guida, ``Application of {M}ellin-kind statistics to polarimetric
  $\cal g$ distribution for {SAR} data,'' \emph{IEEE Transactions on Geoscience
  and Remote Sensing}, vol.~52, no.~6, pp. 3513--3528, June 2014.

\bibitem{FreryCribariSouza:JASP:04}
A.~C. Frery, F.~Cribari-Neto, and M.~O. Souza, ``Analysis of minute features in
  speckled imagery with maximum likelihood estimation,'' \emph{EURASIP Journal
  on Advances in Signal Processing}, vol. 2004, no.~16, pp. 2476--2491, 2004.

\bibitem{DealingMonotoneLikelihood}
D.~M. Pianto and F.~Cribari-Neto, ``Dealing with monotone likelihood in a model
  for speckled data,'' \emph{Computational Statistics and Data Analysis},
  vol.~55, pp. 1394--1409, 2011.

\bibitem{arndt2004information}
C.~Arndt, \emph{Information Measures: Information and Its Description in
  Science and Engineering}.\hskip 1em plus 0.5em minus 0.4em\relax Springer,
  2004.

\bibitem{4218961}
S.~Aviyente, F.~Ahmad, and M.~G. Amin, ``Information theoretic measures for
  change detection in urban sensing applications,'' in \emph{IEEE Workshop on
  Signal Processing Applications for Public Security and Forensics}, 2007, pp.
  1--6.

\bibitem{5599869}
B.~Vemuri, M.~Liu, S.-I. Amari, and F.~Nielsen, ``Total {B}regman divergence
  and its applications to {DTI} analysis,'' \emph{IEEE Transactions on Medical
  Imaging}, vol.~30, no.~2, pp. 475--483, 2011.

\bibitem{5208318}
A.~Nascimento, R.~Cintra, and A.~Frery, ``Hypothesis testing in speckled data
  with stochastic distances,'' \emph{IEEE Transactions on Geoscience and Remote
  Sensing}, vol.~48, no.~1, pp. 373--385, 2010.

\bibitem{ClassificationPolSARSegmentsMinimizationWishartDistances}
W.~B. Silva, C.~C. Freitas, S.~J.~S. Sant'Anna, and A.~C. Frery,
  ``Classification of segments in {PolSAR} imagery by minimum stochastic
  distances between {W}ishart distributions,'' \emph{IEEE Journal of Selected
  Topics in Applied Earth Observations and Remote Sensing}, vol.~6, no.~3, pp.
  1263--1273, 2013.

\bibitem{EdgeDetectionDistancesEntropiesJSTARS}
A.~D.~C. Nascimento, M.~M. Horta, A.~C. Frery, and R.~J. Cintra, ``Comparing
  edge detection methods based on stochastic entropies and distances for
  {PolSAR} imagery,'' \emph{IEEE Journal of Selected Topics in Applied Earth
  Observations and Remote Sensing}, vol.~7, no.~2, pp. 648--663, Feb. 2014.

\bibitem{SARSegmentationLevelSetGA0}
R.~C.~P. Marques, F.~N. Medeiros, and J.~{Santos Nobre}, ``{SAR} image
  segmentation based on level set approach {GA0} model,'' \emph{IEEE
  Transactions on Pattern Analysis and Machine Intelligence}, vol.~34, no.~10,
  pp. 2046--2057, 2012.

\bibitem{Liese2006}
F.~Liese and I.~Vajda, ``On divergences and informations in statistics and
  information theory,'' \emph{IEEE Transactions on Information Theory},
  vol.~52, no.~10, pp. 4394--4412, 2006.

\bibitem{APSAR2013ParameterEstimationStochasticDistances}
J.~Cassetti, J.~Gambini, and A.~C. Frery, ``Parameter estimation in {SAR}
  imagery using stochastic distances,'' in \emph{Proceedings of The 4th
  Asia-Pacific Conference on Synthetic Aperture Radar (APSAR)}, Tsukuba, Japan,
  2013, pp. 573--576.

\bibitem{Bohyung2008}
B.~Han, D.~Comaniciu, Y.~Zhu, and L.~Davis, ``Sequential kernel density
  approximation and its application to real-time visual tracking,'' \emph{IEEE
  Transactions on Pattern Analysis and Machine Intelligence}, vol.~30, no.~7,
  pp. 1186--1197, 2008.

\bibitem{chensx2000}
S.~Chen, ``Probability density functions estimation using {G}amma kernels,''
  \emph{Annals of Institute Statistical Mathematics}, vol.~52, no.~3, pp.
  471--480, 2000.

\bibitem{scaillet2004}
O.~Scaillet, ``Density estimation using inverse and reciprocal inverse
  {G}aussian kernels,'' \emph{Journal of Nonparametric Statistics}, vol.~16,
  no. 1--2, pp. 217--226, 2004.

\bibitem{ECT291329}
T.~Bouezmarni and O.~Scaillet, ``Consistency of asymmetric kernel density
  estimators and smoothed histograms with application to income data,''
  \emph{Econometric Theory}, pp. 390--412, 2005.

\bibitem{Achim03sarimage}
A.~Achim, P.~Tsakalides, and A.~Bezerianos, ``{SAR} image denoising via
  {B}ayesian wavelet shrinkage based on heavy-tailed modeling,'' \emph{IEEE
  Transactions on Geoscience and Remote Sensing}, vol.~41, pp. 1773--1784,
  2003.

\bibitem{mejailfreryjacobobustos2001}
M.~E. Mejail, A.~C. Frery, J.~Jacobo-Berlles, and O.~H. Bustos, ``Approximation
  of distributions for {SAR} images: Proposal, evaluation and practical
  consequences,'' \emph{Latin American Applied Research}, vol.~31, pp. 83--92,
  2001.

\bibitem{GambiniSC08}
J.~Gambini, M.~Mejail, J.~J. Berlles, and A.~Frery, ``Accuracy of local edge
  detection in speckled imagery,'' \emph{Statistics \& Computing}, vol.~18,
  no.~1, pp. 15--26, 2008.

\bibitem{Tison2004}
C.~Tison, J.-M. Nicolas, F.~Tupin, and H.~Maitre, ``A new statistical model for
  {M}arkovian classification of urban areas in high-resolution {SAR} images,''
  \emph{IEEE Transactions on Geoscience and Remote Sensing}, vol.~42, no.~10,
  pp. 2046--2057, 2004.

\bibitem{Gre}
R.~F. Green, ``Outlier-prone and outlier-resistant distributions,''
  \emph{American Statistical Association}, vol.~71, no. 354, pp. 502--505,
  1976.

\bibitem{Jorgensen}
J.~Danielsson, B.~N. Jorgensen, M.~Sarma, and C.~G. de~Vries, ``Comparing
  downside risk measures for heavy tailed,'' \emph{Economics Letters}, vol.~92,
  no.~2, pp. 202--208, 2006.

\bibitem{Rojo}
J.~Rojo, ``Heavy-tailed densities,'' \emph{Wiley Interdisciplinary Reviews
  Computing Statistics}, vol.~5, no.~10, pp. 30--40, 2013.

\bibitem{FreryFerreroBustosIJRS_FinalFinal2008}
A.~C. Frery, S.~Ferrero, and O.~H. Bustos, ``The influence of training errors,
  context and number of bands in the accuracy of image classification,''
  \emph{International Journal of Remote Sensing}, vol.~30, no.~6, pp.
  1425--1440, 2009.

\bibitem{ESAR}
R.~Horn, ``The {DLR} airborne {SAR} project {E-SAR},'' in \emph{Proceedings
  IEEE International Geoscience and Remote Sensing Symposium}, vol.~3.\hskip
  1em plus 0.5em minus 0.4em\relax IEEE Press, 1996, pp. 1624--1628.

\bibitem{Jogesh2006}
G.~J. Babu and E.~D. Feigelson, ``Astrostatistics: Goodness-of-fit and all
  that!'' in \emph{Astronomical Data Analysis Software and Systems XV ASP
  Conference Series.}, vol. 351, 2006, pp. 127--136.

\bibitem{R}
\BIBentryALTinterwordspacing
{R Core Team}, \emph{R: A Language and Environment for Statistical Computing},
  R Foundation for Statistical Computing, Vienna, Austria, 2013. [Online].
  Available: \url{http://www.R-project.org/}
\BIBentrySTDinterwordspacing

\end{thebibliography}

\begin{IEEEbiography}[{\includegraphics[width=1in]{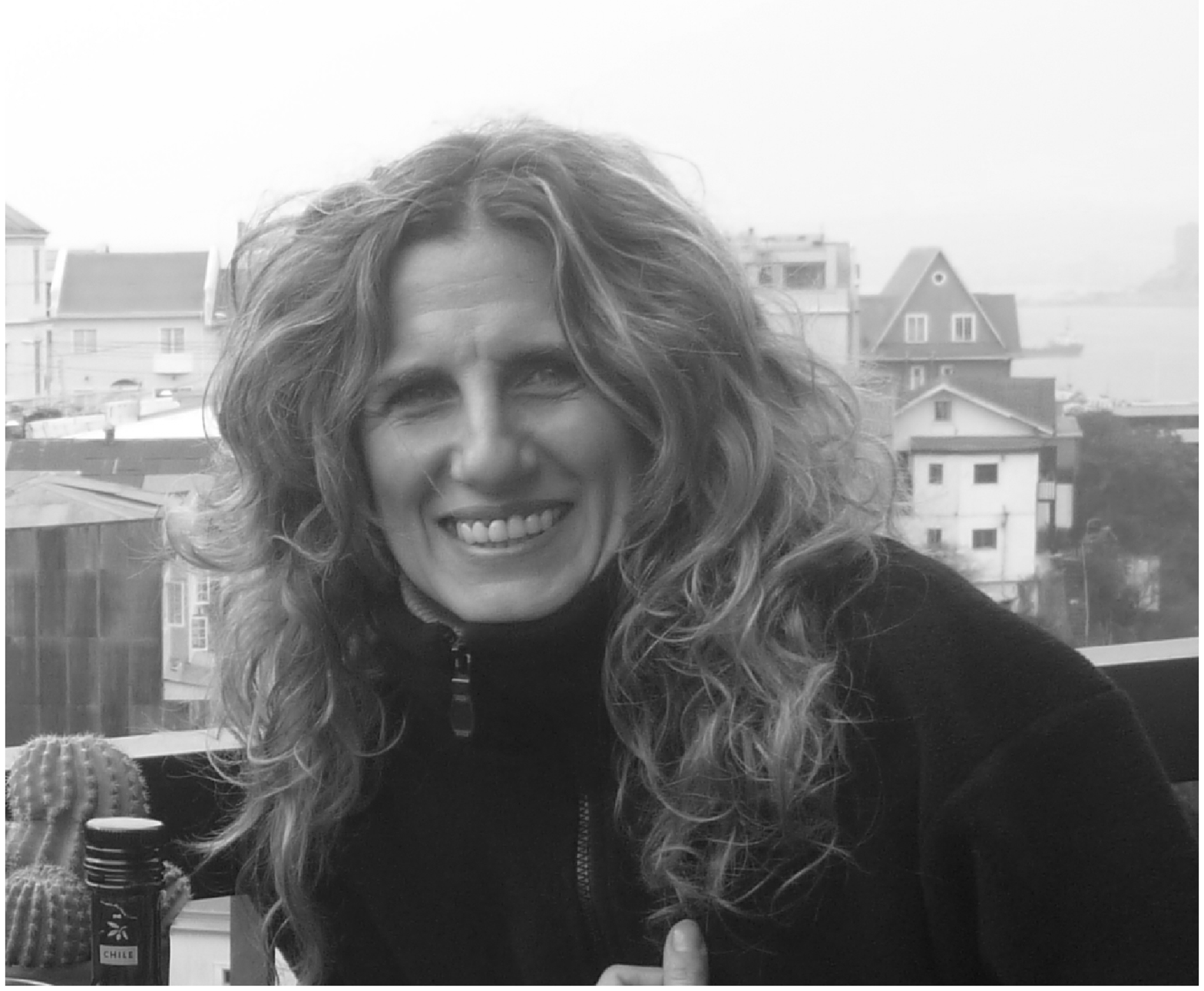}}]{Juliana Gambini} 
received the B.S. degree in Mathematics and the Ph.D. degree in Computer Science both from Universidad de Buenos Aires (UBA), Argentina.
She is currently Professor at the Instituto Tecnol\'ogico de Buenos Aires (ITBA), Buenos Aires, and Professor at Universidad Nacional de Tres de Febrero, Pcia. de Buenos Aires.
Her research interests include SAR image processing, video processing and image recognition.
\end{IEEEbiography}

\protect\vfill

\begin{IEEEbiography}[{\includegraphics[width=1in]{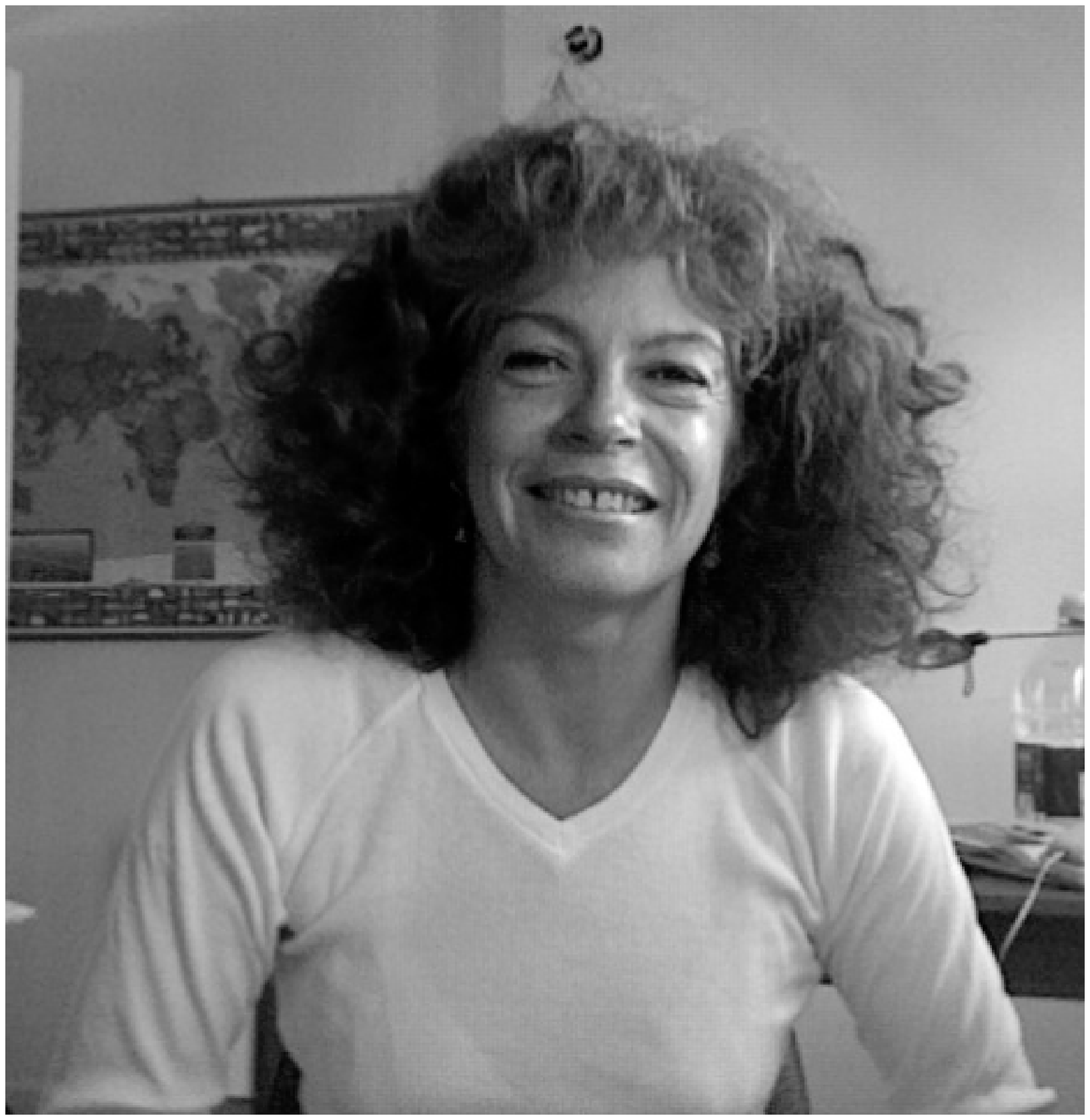}}]{Julia Cassetti} 
received the B.Sc. degree in Mathematics and a postgraduate diploma in Mathematical Statistics, both from the Universidad de Buenos Aires (UBA), Argentina. 
She is currently an Assistant Professor at the Universidad Nacional de General Sarmiento, Argentina. 
Her research interests are SAR imaging and applied statistics.
\end{IEEEbiography}

\begin{IEEEbiography}[{\includegraphics[width=1in,height=1.1in]{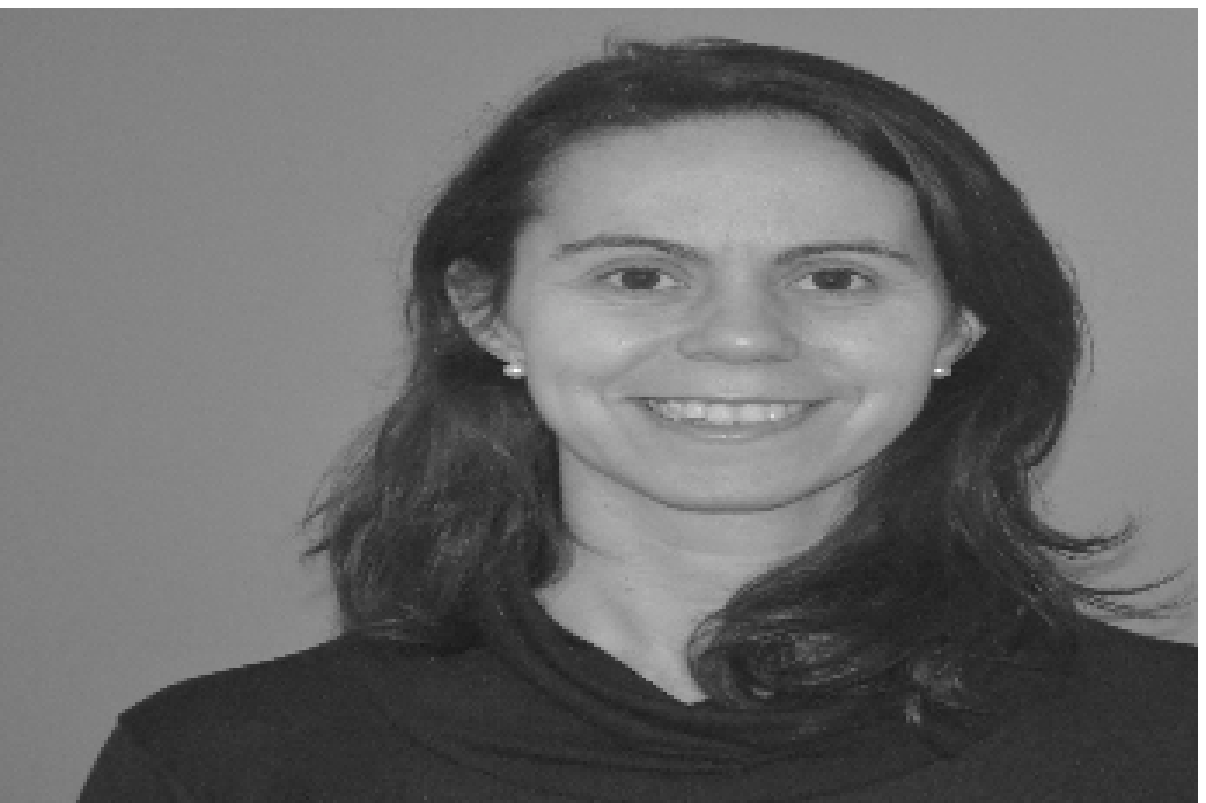}}]{Mar\'{\i}a Magdalena Lucini} received her BS and PhD degrees in mathematics from the Universidad Nacional de C\'ordoba, Argentina, in 1996 and 2002, respectively. She currently is Professor at the Universidad Nacional del Nordeste, and a researcher of CONICET, both in Argentina. Her research interests include statistical image processing and modelling, hyperspectral data, SAR and PolSAR data.
\end{IEEEbiography}

\begin{IEEEbiography}[{\includegraphics[width=1in]{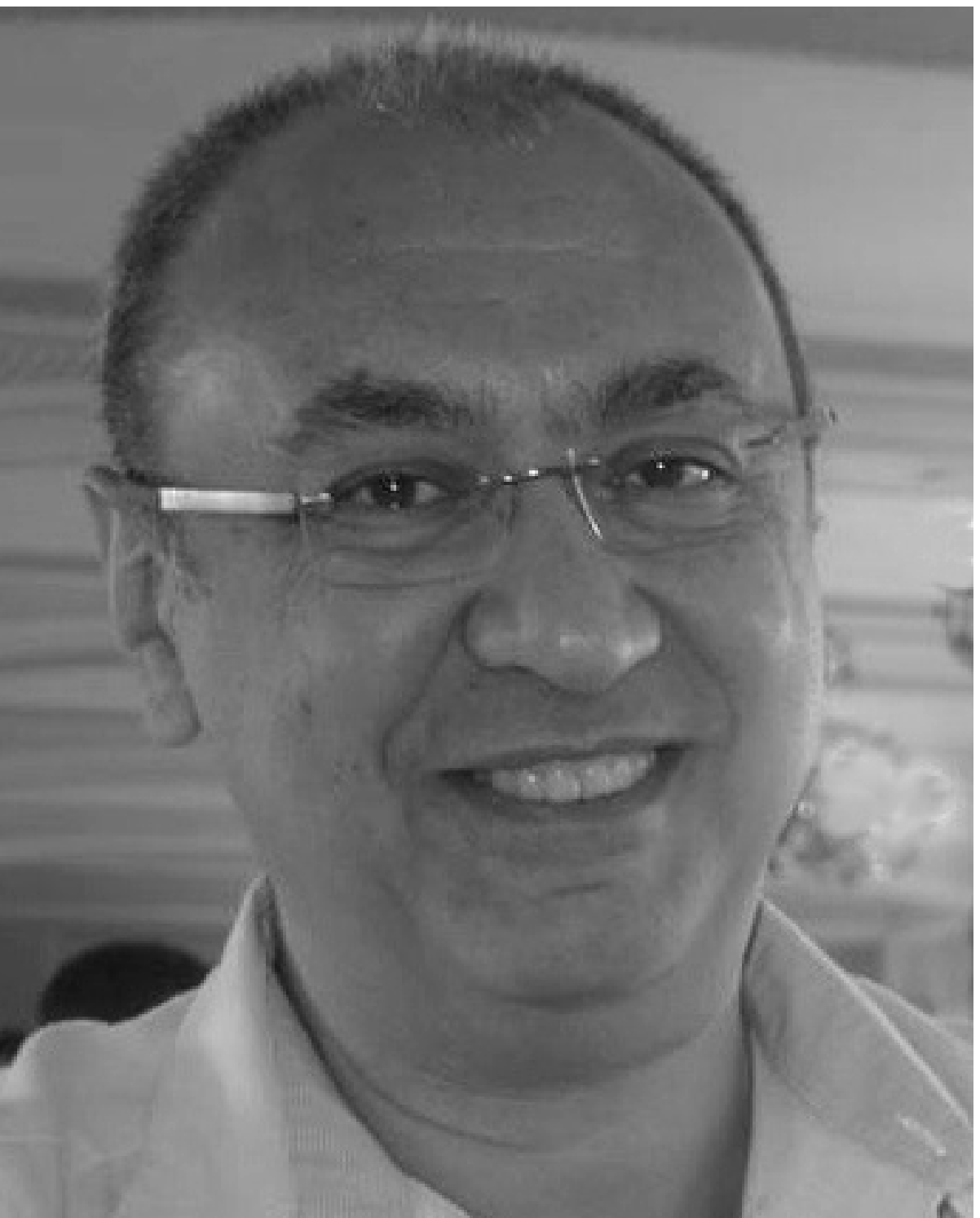}}]{Alejandro C.\ Frery} (S'92--SM'03)
received the B.Sc. degree in Electronic and Electrical Engineering from the Universidad de Mendoza, Argentina.
His M.Sc. degree was in Applied Mathematics (Statistics) from the Instituto de Matem\'atica Pura e Aplicada (IMPA, Rio de Janeiro) and his Ph.D. degree was in Applied Computing from the Instituto Nacional de Pesquisas Espaciais (INPE, S\~ao Jos\'e dos Campos, Brazil).
He is currently the leader of LaCCAN -- \textit{Laborat\'orio de Computa\c c\~ao Cient\'ifica e An\'alise Num\'erica}, 
Universidade Federal de Alagoas, Brazil.
His research interests are statistical computing and stochastic modelling.
\end{IEEEbiography}

\protect\vfill

\end{document}